\documentclass[a4paper, UKenglish, cleveref, nameinlink,colorlinks=true, linkcolor=blue!75!green, urlcolor=blue!75!green, citecolor=blue!75!green, autoref, thm-restate]{lipics-v2021}

\usepackage{paralist}
\usepackage{dsfont}

\usepackage[textsize=footnotesize]{todonotes}

\usepackage{apptools,thmtools}
\let\origrestatable=\restatable
\renewcommand{\restatable}[1][]{%
  \origrestatable[%
    \ifstrempty{#1}%
      {$\star$}%
      {\IfAppendix{\hyperref[#1]{$\star$}}{\hyperref[#1*]{$\star$}}}%
  ]%
}

\DeclareMathOperator{\skel}{skel}
\DeclareMathOperator{\tw}{tw}
\DeclareMathOperator{\bw}{bw}

\nolinenumbers
\Crefname{theorem}{Theorem}{Theorems}
\Crefname{theorem2}{Theorem}{Theorems}
\Crefname{lemma}{Lemma}{Lemmas}
\Crefname{lemma2}{Lemma}{Lemmas}
\Crefname{figure}{Fig.}{Figs.}
\Crefname{algorithm}{Listing}{Listings}
\Crefname{section}{Section}{Sections}
\Crefname{observation}{Observation}{Observations}
\Crefname{property}{Property}{Properties}
\Crefname{lemma}{Lemma}{Lemmas}
\Crefname{claim}{Claim}{Claims}
\Crefname{claimx}{Claim}{Claims}
\Crefname{figure}{Fig.}{Figs.}
\Crefname{subfigure}{Fig.}{Figs.} 
\Crefname{minipage}{Fig.}{Figs.}
\Crefname{enumi}{Condition}{Conditions}
\Crefname{reductionrule}{Rule}{Rules}

\usepackage{xcolor}
\definecolor{realblue}{rgb}{0,0,1}
\definecolor{defblue}{rgb}{0.274,0.392,0.666}
\definecolor{darkerblue}{rgb}{0.094,0.455,0.804}
\definecolor{darkblue}{rgb}{0.063,0.306,0.545}
\definecolor{linkblue}{rgb}{0.098,0.098,0.4392}
\definecolor{red}{rgb}{0.627,0.117,0.156}
\definecolor{green}{RGB}{51, 153, 102}
\definecolor{orange}{rgb}{0.903,0.739,0.382}
\definecolor{realred}{rgb}{1,0,0}
\definecolor{lipicsblue}{rgb}{0.08235294118,0.3098039216,0.537254902}

\usepackage{cite} %
\usepackage{xspace} %

\let\emph\relax
\DeclareTextFontCommand{\emph}{\color{defblue}\em}
\DeclareTextFontCommand{\bl}{\color{lipicsblue}}
\hypersetup{colorlinks=true,
    linkcolor=lipicsblue,
    anchorcolor=lipicsblue,
    citecolor=lipicsblue,
    filecolor=lipicsblue,
    menucolor=lipicsblue,
    urlcolor=lipicsblue,
    bookmarksopen=true,
    bookmarksopenlevel=2,
    bookmarksnumbered=true,
    plainpages=false,
    }

\newcommand{\NP}{\textsf{NP}\xspace}

\newcommand{\ifaces}[1]{\mathcal F^{\mathrm{i}}_{#1}}
\newcommand{\ofaces}[1]{\mathcal F^{\mathrm{o}}_{#1}}
\newcommand{\bfaces}[1]{\mathcal F^{\mathrm{b}}_{#1}}
\newcommand{\rfaces}[1]{{\mathcal F}_{#1}}
\newcommand{\iv}[1]{\mathcal V^{\mathrm{i}}_{#1}}
\newcommand{\ov}[1]{\mathcal V^{\mathrm{o}}_{#1}}
\newcommand{\bv}[1]{\mathcal V^{\mathrm{b}}_{#1}}
\newcommand{\rv}[1]{\mathcal V_{#1}}
\newcommand{\desc}[1]{\Omega(#1)}

\usepackage{graphicx} %

\hideLIPIcs
\newcommand{\titleproposal}{
Beyond Degree Four:\\ Near-Orthogonal Planar Drawings
}
\title{\titleproposal}
\titlerunning{Beyond Degree Four: Near-Orthogonal Planar Drawings}

\authorrunning{P.~Angelini, S.~Cornelsen, G.~{Da~Lozzo}, S.~Hong, and I.~Rutter}

\author{Patrizio Angelini}{John Cabot University, Rome, Italy}{pangelini@johncabot.edu}{https://orcid.org/0000-0002-7602-1524}{}
\author{Sabine Cornelsen}{University of Konstanz, Germany}{sabine.cornelsen@uni-konstanz.de}{https://orcid.org/0000-0002-1688-394X}{}
\author{Giordano Da Lozzo}{Roma Tre University, Italy}{giordano.dalozzo@gmail.com}{https://orcid.org/0000-0003-2396-5174}{}
\author{Seok-Hee Hong}{University of Sydney, Australia }{seokhee.hong@sydney.edu.au}{https://orcid.org/0000-0003-1698-3868}{}
\author{Ignaz Rutter}{University of Passau, Germany}{rutter@fim.uni-passau.de}{https://orcid.org/0000-0002-3794-4406}{}

\Copyright{Patrizio Angelini, Sabine Cornelsen, Giordano {Da~Lozzo}, Seokhee Hong, and Ignaz Rutter}

\ccsdesc[500]{Theory of computation~Fixed parameter tractability}
\ccsdesc[500]{Theory of computation~Computational geometry}
\ccsdesc[500]{Mathematics of computing~Graph algorithms}

\category{Track 1, Long Paper} 

\acknowledgements{This work started at the Bertinoro Workshop on Graph Drawing BWGD 2026.}

\keywords{planar orthogonal drawings, unbounded degree, FPT, PTAS, treewidth}%

\begin{document}

\maketitle

\begin{abstract}
Orthogonal planar drawings constitute a classical and mainstream research topic in graph drawing due to their clarity and wide applicability. In an orthogonal planar drawing of a graph, each face is represented as an {\em orthogonal polygon}, that is, a polygon whose edges 
are either horizontal or vertical. Yet a planar graph admits such a representation if and only if its maximum degree is at most~four. 

In this paper, we consider planar polyline drawings of graphs with unrestricted maximum degree. We focus on drawings that are {\em ``close to orthogonal''}, where closeness is measured by the number of faces that are not orthogonal polygons. We show that, even when the input graph is triconnected and thus has a unique planar embedding, the problem of testing whether there exists a planar polyline drawing with at most $h$ non-orthogonal faces is $\mathsf{NP}$-complete. Motivated by this computational hardness, we study parameterized and approximation algorithms. In the fixed-embedding setting, we prove that the problem admits linear-time FPT algorithms parameterized by (i) the outerplanarity index and (ii) the natural parameter $h$. 
In addition,  we provide an FPT algorithm parameterized by the treewidth and a polynomial-time approximation scheme. 
In the variable-embedding setting, we give an FPT algorithm parameterized by treewidth for biconnected graphs.

\subparagraph{Generative AI Declaration}
Generative AI was solely used to improve the writing quality of some paragraphs in the preparation of this article.

\end{abstract}

\section{Introduction}

    Orthogonal planar drawings are among the most classic and extensively studied paradigms in graph drawing. In this model, each vertex is represented by a point, while each edge is depicted as an \emph{orthogonal polyline}, i.e., a sequence of horizontal and vertical segments. This representation has proven particularly useful in domains such as circuit design, software engineering, database design, and of course, network visualization~\cite{BatiniTT84,9780470073049.ch3,EiglspergerGKKJLKMS04,Juenger04,lengauer:90}. This enduring popularity stems from the clear visual structure created by the combination of planarity and axis-aligned edge routes, which make connections much easier to follow and interpret than in more general planar drawing styles. 

    A large body of work has identified the number of bends as a key factor influencing readability and visual quality in orthogonal drawings. Experimental results by Bertolazzi, Di Battista, and Didimo~\cite{BertolazziBD00} indicate that minimizing bends has a positive impact on several quality measures of drawing effectiveness. The algorithmic study of orthogonal graph drawings has therefore, for decades, focused on bend-minimum and bend-bounded drawings and on the efficient computation of such drawings. In the variable embedding setting, Garg and Tamassia~\cite{GargT01} showed that testing whether a graph admits a \emph{rectilinear planar drawing}, i.e., an orthogonal planar drawing without bends, is \NP-complete, confirming a conjecture of Storer~\cite{Storer80}, and that approximating the minimum number of bends for an $n$-vertex graph 
    with an $O(n^{1-\epsilon})$ error is \NP-hard for any $\epsilon>0$. At the same time, several positive results are known for important restricted graph classes, such as planar graphs of maximum degree~$3$~\cite{BattistaLV98,ChangY17,DidimoLP18,DidimoLOP20}, series-parallel graphs~\cite{DidimoKLO23,ZhouN08},  and outerplanar graphs~\cite{Frati22}, and for the fixed embedding setting~\cite{Tamassia87}. Bend-minimum orthogonal drawings have also been investigated through the lens of parameterized complexity~\cite{DidimoL98,  GiacomoDLMO24,GiacomoLM22,JansenKKLMS23}.

    A well-known structural limitation of the orthogonal paradigm is that it naturally applies only to graphs of small degree. Indeed, orthogonal representations rely on the four axis-aligned ports around each vertex. In fact, already in~1981,  Valiant~\cite{Valiant81} proved that a graph has an orthogonal drawing if and only if it has maximum degree $4$. 
    Due to their sustained appeal, attempts at extending the orthogonal drawing style to higher-degree graphs have been pursued. A notable example of maximum degree-$8$ graphs, originating in the context of metro map visualization and map schematization~\cite{HongMN06,StottRMW11,Wolff13}, is octilinear drawings.  In an \emph{octilinear drawing}, every edge is drawn as a sequence of horizontal, vertical, and diagonal line segments with a slope of $\pm 45^\circ$. For planar graphs of unbounded degree, F{\"{o}}{\ss}meier and Kaufmann~\cite{FossmeierK95} proposed an extension of the orthogonal drawing paradigm via the \emph{Kandinsky drawing model}, where vertices are represented as axis-aligned boxes of uniform size and edges are depicted as orthogonal polylines attached to any side of such boxes. 
    A different approach, tailored for directed graphs, is the \emph{L-drawing standard}, introduced 
    in ~\cite{KariOALBDPRT18a}. L-drawings combine orthogonal representations with adjacency-matrix-like representations: each vertex is assigned a point with exclusive $x$- and $y$-coordinates, and each edge is drawn as a $1$-bend polyline consisting of one vertical segment leaving its tail and one horizontal segment entering its head. This model retains the strong axis-aligned regularity of orthogonal drawings while allowing multiple edges to overlap along portions of rows and columns. By allowing edges to share ports, unlike in the classic orthogonal setting, this model supports representations of directed graphs of arbitrary degree.
    For further research related to octilinear, Kandinsky, and L-drawings, see,  \cite{BekosG0015,BekosKK17,BekosFK19}, \cite{BlasiusBR14,Bekos00S15,BekosFK19}, and \cite{AngeliniCCL22,ChaplickCCLNPT023,AngeliniCCL24},~respectively.

    \subparagraph{Our contributions.} In this research, we follow a different, and in our view complementary, route to transfer to planar graphs of arbitrary degree the core advances of orthogonal drawings. Rather than extending orthogonality by permitting additional prescribed segment directions %
    (as in the octilinear model), allowing overlaps between different edges (as in the L-drawing model) or enlarging the geometric vocabulary for vertices (as in the Kandinsky model), we seek planar polyline drawings that remain orthogonal on most faces, while all deviations from orthogonality are confined to a small number of faces. This perspective is aimed at preserving the geometric discipline of orthogonal drawings over the majority of the embedding, while providing the flexibility required to accommodate vertices of arbitrarily large degree. 
    
    A face in a planar drawing of a graph is \emph{orthogonal}, if it is bounded by an orthogonal polygon, otherwise it is \emph{non-orthogonal}. In a planar orthogonal drawing, all faces are orthogonal. 
    Observe that, among the edges incident to a vertex, at most four can be drawn with a horizontal or vertical segment at the vertex. The faces incident to any further edge are therefore necessarily non-orthogonal. Given a connected planar graph $G$, we want to find a planar drawing of $G$ in which as few faces as possible are \emph{non-orthogonal}. We call this optimization problem \textsc{Non-Orthogonal Face Minimization}, 
    and we call \textsc{$h$-Non-Orthogonal Faces} the related decision problem of testing for the existence of a planar drawing with at most $h$ non-orthogonal faces. Our results are as follows:
    \begin{itemize}
    \item First, we show that \textsc{$h$-Non-Orthogonal Faces} is \NP-complete even for triconnected planar graphs (\cref{thm:hardness}),  which excludes the existence of a polynomial-time algorithm for the problem even in the fixed-embedding setting (unless P=\NP). 
    \item Second, in the fixed-embedding setting, we show that \textsc{Non-Orthogonal Face Minimization} for an $n$-vertex graph $G$ can be solved in $2^{O(w)}n + \tau$ time, provided that a sphere-cut decomposition of $G$ of width $O(w)$ can be computed in time $\tau$ (\cref{thm:fpt_sc}).  Since an optimal sphere-cut decomposition can be computed in time~$\tau \in O(n^3)$, this yields a cubic-time FPT algorithm with respect to the treewidth $\tw(G)$ of $G$ (\cref{cor:fpt_tw}).  For series-parallel graphs, this yields a linear-time algorithm, since a sphere-cut decomposition of width~$2$ can be computed in linear time (\cref{cor:series-parallel}).  For $k$-outerplanar graphs, a sphere-cut decomposition of width~$O(k)$ can be computed in $O(kn)$ time, which yields a linear-time FPT algorithm with respect to the outerplanarity index (\cref{cor:fpt_outerplanar}).
    We use the latter result to derive a linear-time FPT algorithm with respect to the natural parameter $h$ (\cref{thm:fpt_natural}) and a polynomial-time approximation scheme (\cref{thm:ptas}).    
    \item Finally, in the variable embedding setting, we provide 
    an FPT algorithm parameterized by treewidth for biconnected planar graphs (\cref{thm:fpt_tw_variable}).
    \end{itemize}

\noindent

For statements marked with a $(\star)$, we provide the proof in the appendix.

\section{Preliminaries}\label{se:pre}

In this section, we give preliminaries and basic definitions. For standard concepts and terminology about graphs and their drawings, we refer the reader to~\cite{BattistaETT99,0030488}.  Unless stated otherwise, all graphs in this paper are connected, finite, and simple.

\subparagraph{Basic definitions.}
For a graph $G$, we denote by $V(G)$ the set of vertices of~$G$ and by $E(G)$ the set of edges of~$G$. 
A \emph{plane graph} is a planar graph equipped with a fixed planar embedding, which determines the outer face and the rotation scheme at each vertex.
For a plane graph~$G$, an \emph{outer layering} is defined as follows:  Let $L_i = \emptyset$ for all $i\leq 0$. For $i > 0$, define \emph{layer} $L_i$ to be the set of vertices incident to the outer face of $G - \bigcup_{j<i}L_j$.  
By construction, every vertex belongs to exactly one layer. 
Observe that the boundary of any face of $G$ contains vertices from at most two consecutive layers.
The \emph{outerplanarity index} of $G$ is the smallest $k$ such that $L_i= \emptyset$ for all $i > k$. Equivalently, it is the number of non-empty layers in the outer layering of $G$. A graph with outerplanarity index $k$ is called \emph{$k$-outerplanar}.

A decision problem or an optimization problem, respectively, is fixed-parameter-tractable (FPT) parameterized by a parameter $k$, if each instance of size $n$ can be solved in time $f(k)\cdot n^{O(1)}$ where $f$ is a computable function that only depends on $k$.

\begin{figure}
    \centering
    \includegraphics[width=0.5\linewidth]{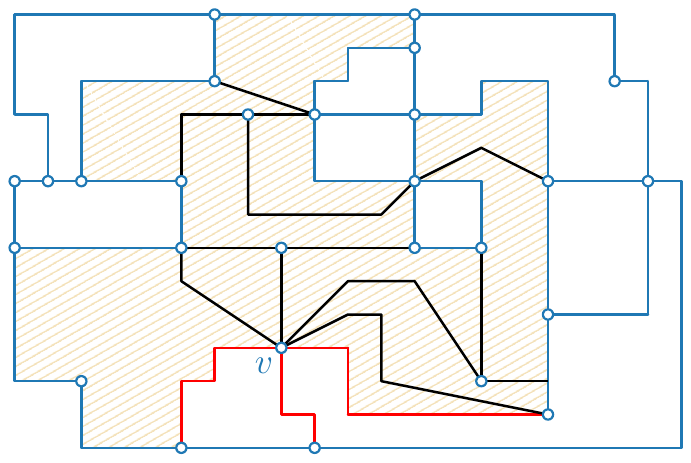}
    \caption{An $h$-near orthogonal planar drawing with $h=11$. The set $F$ of non-orthogonal faces is yellow and hatched. Edges in $E(F)$ are black. The three residual edges incident to vertex $v$~are~red.
    }
    \label{fig:h-near}
\end{figure}

\subparagraph{Near-orthogonal drawings.}
A planar drawing of a graph is \emph{$h$-near orthogonal} if it contains at most $h$ non-orthogonal faces, and a graph admitting such a drawing is \emph{$h$-near orthogonal}; refer to \cref{fig:h-near}. Let $F$ be the set of %
non-orthogonal faces of a planar drawing. 
We denote by $E(F)$ the set of edges that are only incident to faces in $F$.
Observe that, 
if an edge $e$ is not in $E(F)$, %
then it is drawn as an orthogonal polyline, as otherwise the face(s) incident to $e$ and not in $F$ would not be orthogonal. On the other hand, if $e$ is 
in $E(F)$, i.e., if $e$ is only incident to faces in $F$,  
then its drawing does not have to be an orthogonal polyline.
Based on the considerations above, for a set $F'$ of faces of a plane graph $G$, 
we call an edge a \emph{residual edge} w.r.t.\ $F'$ if it is not in $E(F')$. See \cref{fig:h-near}.  The \emph{residual degree} of a vertex~$v$ w.r.t.\ $F'$ is the number of residual edges that are incident to $v$.  Observe that the residual degree of each vertex w.r.t.\ the set of non-orthogonal faces is at most four. Based on these observations, we characterize the geometric problem from a combinatorial perspective.

\begin{restatable}[lem:char]{lemma}{characterization}
\label{lem:char}
    A planar (plane) graph $G$ is $h$-near orthogonal if and only if there is a set $F'$ of at most $h$ faces of $G$ in some (the given) embedding of $G$  such that each vertex has residual degree at most four~w.r.t. $F'$.
\end{restatable}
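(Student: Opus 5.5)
The forward direction is immediate from the discussion preceding the statement. Take an $h$-near orthogonal drawing $\Gamma$ of $G$ (in the given embedding, in the plane case), and let $F'$ be its set of non-orthogonal faces, so $|F'|\le h$. Every residual edge w.r.t.\ $F'$ is incident to some orthogonal face, so it is drawn as an orthogonal polyline. Its first segment at each endpoint is therefore horizontal or vertical. By planarity, at most one edge leaves a vertex $v$ in each of the four axis-parallel directions. Hence $v$ has residual degree at most four.

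For the converse, fix the embedding and a set $F'$ of at most $h$ faces with all residual degrees at most four. I will construct a planar drawing in which every face outside $F'$ is an orthogonal polygon. The plan has three steps.

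\textbf{Step 1: the residual subgraph.} Let $H$ be the plane subgraph consisting of all vertices and all residual edges, with the embedding inherited from $G$. $H$ has maximum degree at most four. Each face of $G$ not in $F'$ is bounded only by residual edges: an edge incident to a face outside $F'$ is not in $E(F')$. Such a face therefore survives as a face of $H$.

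\textbf{Step 2: draw $H$ orthogonally.} I invoke the classical fact (Tamassia; Valiant) that every plane graph of maximum degree four admits a planar orthogonal drawing preserving its embedding. Disconnectedness or nested components of $H$ can be handled in one of two ways:
\begin{itemize}
\item add dummy edges inside faces that are unions of faces of $F'$, keeping the degree at most four by subdividing; or
\item draw the components separately and place each inside the appropriate face, which is possible since the placement is prescribed by $G$'s embedding.
\end{itemize}
In the resulting drawing, every face of $G$ not in $F'$ is a face of $H$ and is drawn as an orthogonal polygon.

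\textbf{Step 3: insert the remaining edges.} Each edge of $E(F')$ lies inside a region of $H$'s drawing, and that region is a union of faces of $F'$ plus these edges. Inside such a region, which is a (possibly non-simply-connected) polygonal domain, I route the edges as arbitrary polylines respecting the embedding. This is possible because the combinatorial embedding of $G$ restricted to that region is planar, so the edges can be drawn as polylines, for instance via a homeomorphism argument or by triangulating the region. The faces created this way are exactly the faces of $F'$, so at most $h$ faces are non-orthogonal.

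For the planar (variable embedding) version, I apply the same argument to the embedding in which $F'$ is chosen.

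The main obstacle is Step 2 together with Step 3: ensuring that the drawing of $H$ respects the rotation system and the face nesting induced by $G$, especially when $H$ is disconnected. Connecting components with dummy paths through $F'$-regions, while keeping the degree at most four, is the first thing I would try.
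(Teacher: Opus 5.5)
Your proof is correct and follows the paper's argument. The forward direction is the same argument (a vertex has only four axis-parallel directions), and the converse likewise draws the residual subgraph orthogonally while respecting the embedding, including the relative placement of its components, and then inserts the edges of $E(F')$ as arbitrary polylines; your explicit handling of a disconnected residual subgraph (dummy connections through $F'$-regions, with subdivisions to keep the degree at most four) fills in a step the paper covers with ``we may assume''.
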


We can now formulate our optimization problem and its decision version as follows.
\medskip
\begin{description}
\item[\textsc{Non-Orthogonal Face Minimization}:]
Given a planar (plane) graph $G$, find a smallest set~$F'$ of faces 
of some (respectively, the given) embedding 
of $G$ such that each vertex has residual degree at most four w.r.t.~$F'$.
\item[\textsc{$h$-Non-Orthogonal Faces}:] 
Given a planar (plane) graph $G$, is there a set $F'$ of faces 
of some (respectively, the given) embedding 
of $G$ such that each vertex has residual degree at most four w.r.t.~$F'$ and $|F'| \leq h$?
\end{description}

We also consider the following generalization of
\textsc{Non-Orthogonal Face Minimization}. 

\begin{description}
\item[\textsc{Non-Orthogonal Face Minimization$^*$}:]
Given a planar (plane) graph $G$ and a subset $V' \subseteq V(G)$, find a smallest set~$F'$ of faces of some (respectively, the given) embedding 
of $G$ such that each vertex of $V'$ has residual degree at most four w.r.t.~$F'$.
\end{description}

\subparagraph{Tree Decomposition.}

A \emph{tree decomposition}\label{def:treeDecomposition} \cite{robertson/seymour3:84} of a graph $G$ is a tree $\mathcal T$ such that each node $\nu \in V(\mathcal T)$ is associated with a set $V(\nu) \subseteq V(G)$, the \emph{bag} of $\nu$, 
such that %
\begin{inparaenum}[(a)]
	\item $\bigcup_{\nu\in V(\mathcal T)}V(\nu)=V(G)$,
	\item for each edge $e$ in $E(G)$, there is a node  $\nu \in V(\mathcal T)$ such that $e \subseteq  V(\nu)$, and,
	\item for each $v\in V(G)$, the subgraph of $\mathcal T$ induced by the vertices $\{\nu\in V(\mathcal T) \; | \; v \in V(\nu)\}$ is connected.
\end{inparaenum}
The \emph{width} of a tree decomposition is $\max_{\nu \in V(\mathcal T)}|V(\nu)|-1$. 
The \emph{treewidth} $\tw(G)$ 
of a graph~$G$ is the minimum among the widths of any of its tree decompositions. 

\begin{figure}
    \centering
    \includegraphics[width=0.5\linewidth,page=2]{figures/h-orthogonal.pdf}
    \caption{The partition of~$\mathcal F$ with respect to the noose $\gamma$ in interior, boundary, and outside faces (shaded blue, yellow, and white, respectively). The boundary vertices $b_1,\dots,b_6$ are filled with white.}
    \label{fig:noose}
\end{figure}

\subparagraph{Sphere-cut decomposition.} The concept of sphere-cut decomposition was introduced by Dorn, Penninkx, Bodlaender, and Fomin~\cite{DornPBF10}. 
Given a connected plane graph $G$ with planar drawing $\Gamma$ in the sphere, 
a \emph{sphere-cut decomposition} divides $G$ along simple closed curves, called \emph{geometric nooses}, that intersect $G$ only at vertices; refer to \cref{fig:noose}.  The decomposition yields a special \emph{branch decomposition}, i.e., a ternary tree $\mathcal T$ (that is, every internal node of~$\mathcal T$ has degree $3$) whose leaves correspond to the edges of $G$ and whose edges correspond to partitions of $E(G)$.   More precisely, if $e$ is an edge of  $\mathcal T$, let $E_1(e)$ and $E_2(e)$ be the set of edges of $G$ stored in the leaves of the two connected components of $\mathcal T - e$ and let $V_e$ be the \emph{boundary set} of $e$, 
that is, the set of vertices of $G$ that are incident to some edge in $E_1(e)$ and to some edge in $E_2(e)$. Then, there exists a geometric noose $\gamma$ that traverses only the vertices in $V_e$ and that separates~$E_1(e)$ and~$E_2(e)$ in the sense that they lie on different~sides~of~$\gamma$. The edge $e$ of $\cal T$ stores the set $V_e$ and the order in which such a set is traversed by $\gamma$.

In the original definition in~\cite{DornPBF10}, each noose is additionally required to traverse each face only once.  In this paper, we adopt a more combinatorial definition introduced by Dorn~\cite{dorn:stacs10}, which avoids such a requirement. 
A \emph{combinatorial noose} is a sequence $[v_0,f_0,v_1,f_1,\dots,f_{k-1},v_k,f_k]$   in the plane embedding $\cal E$ of a plane graph $G$ is an alternating sequence of vertices of $G$ and faces of $\cal E$ such that:%
\begin{itemize}
\item\label{item:csc:distinctVertices} $v_0 = v_k$ and $v_1,\dots,v_k$ are distinct;
\item for $i=0,\dots,k-1$, $f_i$ is a face incident to both $v_i$ and $v_{i+1}$; and
\item\label{item:csc:nested} if $f_i = f_j$ for any $i\neq j$ and $i,j = 0,\dots,k-1$, then the vertices $v_i$, $v_{i+1}$, $v_j$, and $v_{j+1}$ do not appear in the order $v_i$, $v_j$, $v_{i+1}$, $v_{j+1}$ on the boundary of the face $f_i = f_j$.
\end{itemize}
Observe that a geometric noose defines a combinatorial noose by listing the vertices and faces in the order it traverses them. Conversely, the properties of a combinatorial noose guarantee that there is a geometric noose visiting the same vertices and faces in the same order. 

We root~$\cal T$ by subdividing an arbitrary edge~$h$ of~$\cal T$ with a new node~$\rho$, which we select as the root.  Each node~$\mu \ne \rho$ of~$\cal T$ is associated with the (combinatorial) noose $\gamma$ of its parent edge, and it partitions the vertices of $G$ into three sets: the \emph{boundary vertices}~$\bv{\mu}$ that are traversed by~$\gamma$; the \emph{interior vertices} $\iv{\mu}$, for which all incident edges are leaves of the subtree of $\cal T$ rooted at $\mu$; and the \emph{outside vertices} $\ov{\mu}$, for which no incident edge is a leaf of this subtree.
Likewise, the set~$\mathcal F$ of faces of~$\cal E$ is partitioned into three sets (see \cref{fig:noose}): $\bfaces{\mu}$, which are traversed by~$\gamma$;~$\ifaces{\mu}$, for which all incident edges are leaves of the subtree of~$\cal T$ rooted at~$\mu$; and $\ofaces{\mu}$, for which no incident edge belongs to the subtree of~$\cal T$ rooted at~$\mu$.  \Cref{fig:sc-merge} shows the boundary and inner faces as well as the nooses corresponding to a node~$\mu$ and its two children~$\nu,\lambda$.

The \emph{width} of a sphere-cut decomposition of $G$ is $\max_{e \in E(\mathcal T)} V_e$.  The minimum among the widths of any sphere-cut decomposition of $G$ is denoted by $\bw(G)$.   
Robertson and Seymour~\cite{DBLP:journals/jct/RobertsonS91} show that $\bw(G) \in \Theta(\tw(G))$, more precisely, 
\mbox{$\bw(G)-1 \leq \tw(G) \leq \lfloor \frac{3}{2}\bw(G)\rfloor-1$.}
By triangulating the graph $G$ without asymptotically increasing the treewidth~\cite{BiedlV13}, and hence $\bw(G)$, and by applying the construction in~\cite{DornPBF10}, we obtain the following decomposition tool. 

\begin{restatable}[th:sc-construction]{proposition}{scgeneral}
\label{th:sc-construction}
    For a connected plane graph $G$, a sphere-cut decomposition of width $O(\bw(G))$ can be constructed in $O(n^3)$ time.
\end{restatable}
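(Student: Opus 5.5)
The plan has four steps: reduce to a triangulation, invoke the construction of Dorn et al.~\cite{DornPBF10}, bound the width against $\bw(G)$, and transfer the decomposition back to $G$.

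\textbf{Step 1: triangulate.} First, I would compute a plane triangulation $G^+ \supseteq G$ on the same vertex set with $\tw(G^+) \in O(\tw(G))$. I would use the result of Biedl and Velázquez~\cite{BiedlV13}, which guarantees such a supergraph and can be computed in polynomial time (certainly within $O(n^3)$ given a tree decomposition). The embedding of $G^+$ restricts to the given embedding of $G$. By the Robertson--Seymour relation $\bw(G)-1 \le \tw(G) \le \lfloor \frac32 \bw(G)\rfloor -1$, we get $\bw(G^+) \in O(\tw(G^+)) \subseteq O(\tw(G)) \subseteq O(\bw(G))$.

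\textbf{Step 2: decompose $G^+$.} Since $G^+$ is a triangulation, it is $3$-connected and in particular bridgeless. This is exactly the hypothesis under which Dorn, Penninkx, Bodlaender, and Fomin~\cite{DornPBF10} show how to compute, in $O(n^3)$ time, an optimal branch decomposition that is a sphere-cut decomposition. The computation uses the Seymour--Thomas ratcatcher algorithm as implemented by Gu and Tamaki. This yields a sphere-cut decomposition $\mathcal T^+$ of $G^+$ of width $\bw(G^+) \in O(\bw(G))$.

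\textbf{Step 3: transfer to $G$.} Next I would delete from $\mathcal T^+$ the leaves corresponding to edges of $E(G^+)\setminus E(G)$. I would then suppress the resulting degree-$2$ nodes and repeatedly remove any internal nodes that have become leaves, obtaining a ternary tree $\mathcal T$ whose leaves are exactly $E(G)$. For each edge $e$ of $\mathcal T$ there is a corresponding edge $e^+$ of $\mathcal T^+$ whose bipartition restricts to that of $e$. Hence the boundary set $V_e$ is contained in $V_{e^+}$, so the width does not increase.

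\textbf{Main obstacle.} The delicate point is that the nooses must remain valid for $G$. I would take the geometric noose $\gamma$ of $e^+$ in the drawing of $G^+$. After deleting the edges of $G^+-G$, it is still a closed curve meeting $G$ only at vertices, and it separates $E_1(e)$ from $E_2(e)$. It may pass through vertices of $V_{e^+}\setminus V_e$, but these can be avoided by a small local rerouting around each such vertex. Because each such vertex has all its remaining $G$-edges on one side of $\gamma$, the rerouting stays inside faces of $G$.

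This step is exactly why the paper adopts Dorn's combinatorial nooses~\cite{dorn:stacs10}, which drop the requirement that a noose visit each face only once. After edge deletions, several faces of $G^+$ merge into one face of $G$, so the rerouted curve may visit the same face of $G$ several times. It still satisfies the non-crossing (nesting) condition, since it comes from a simple closed curve. I would therefore verify the three combinatorial-noose conditions by reading off the vertex/face sequence of the rerouted simple curve.

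\textbf{Running time.} All steps besides Step 2 take polynomial time well within $O(n^3)$, namely linear or near-linear time after triangulation, which gives the claimed bound.
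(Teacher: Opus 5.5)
Your proposal is correct and follows essentially the same route as the paper. Like the paper, you triangulate via Biedl and Vel\'azquez, apply the $O(n^3)$ construction of Dorn et al.\ to the triangulation, and bound $\bw(G^+)$ by $O(\bw(G))$ through the Robertson--Seymour inequalities. You then prune the triangulation-edge leaves and suppress degree-$2$ nodes, using that each boundary set can only shrink.

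The only difference is in how the nooses are updated. The paper does it combinatorially: it replaces each face by the face of $G$ containing it, and deletes every vertex that is no longer a boundary vertex together with its predecessor face. You obtain the same update by locally rerouting the curve around such vertices, which makes explicit why the shortened sequence is still a valid noose.
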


A decomposition whose width is linear in the outerplanarity index can~be~computed~faster.

\begin{proposition}\label{prop:sc-kouterplane}
    A sphere-cut decomposition of width $2k$  of a $k$-outerplane graph can be computed in $O(kn)$ time.
\end{proposition}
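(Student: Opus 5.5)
We need a sphere-cut decomposition of width $2k$ for a $k$-outerplane graph $G$, computed in $O(kn)$ time. The plan has two parts. First, reduce to a graph in which the outer layering has particularly nice structure. Second, build the decomposition along the layers.

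For the reduction, we augment $G$ to a $k$-outerplane graph $G^+$ that contains $G$ as a spanning subgraph and whose faces are small. Concretely, we triangulate every face so that no new layer is created: each new edge either stays within a layer or joins consecutive layers $L_i,L_{i+1}$. This is possible because every face boundary meets at most two consecutive layers, as noted in the preliminaries, and it takes linear time with a standard face-triangulation routine that fans from a vertex of the innermost layer on the face.

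A sphere-cut decomposition of $G^+$ restricts to $G$. We delete the leaves for the added edges and suppress degree-2 nodes. Each boundary set can only shrink. Each noose is still a noose of $G$, because faces of $G^+$ merge into faces of $G$; the combinatorial definition of Dorn that we adopted allows a face to be traversed repeatedly, so the non-crossing condition is the only thing to recheck, and it is inherited. Hence it suffices to treat the case where $G$ is triangulated with edges only within or between consecutive layers.

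For the construction, we use the known fact (Bienstock--Monma, and also the Dorn et al.\ framework) that a $k$-outerplanar graph has branchwidth at most $2k$. We make this fact algorithmic as follows. Take a BFS-like spanning structure: each vertex $v$ in layer $L_i$ with $i>1$ has a parent in $L_{i-1}$ adjacent to it, which exists by triangulation. Consider the radial paths from each vertex to the outer face, of length at most $k$. We order the edges of $G$ by a traversal of the outer cycle and its interior, in the manner of a "sweep".

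Each noose is formed by two radial paths, giving at most $2k$ vertices, one path on each side of a wedge. The noose passes through the faces between consecutive path vertices, and these faces exist by triangulation. Recursively splitting a wedge between two radial paths $P,Q$ at a third radial path $R$ yields a ternary tree. The leaves are the individual edges inside elementary wedges, each bounded by two paths sharing a common tail. The boundary vertices of every split are contained in $V(P)\cup V(Q)$, so the width is at most $2k$.

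Each radial path is computed once in $O(k)$ time, and there are $O(n)$ of them, so the total running time is $O(kn)$.

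The main obstacle is verifying that the regions between radial paths really yield a valid combinatorial noose whose boundary set is exactly the set of vertices incident to edges on both sides, and that wedges whose paths merge or are nested are handled without blowing up the width. To address this, we would first choose the radial paths as a consistent leftmost tree (a BFS tree toward the outer face), so that any two paths merge once and stay merged. Then the union of two paths is a tree path plus the outer face, so the noose is simple.
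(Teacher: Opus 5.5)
Your proposal has a genuine gap at the point you yourself call ``the main obstacle'': you never turn the radial paths into an actual sphere-cut decomposition. Four things are missing.
\begin{itemize}
\item Splitting the wedge between $P$ and $Q$ at an arbitrary third radial path $R$ does not give closed nooses, because the inner endpoints of $P$ and $R$ need not share a face.
\item The edges lying on the separating paths are never assigned to a side.
\item The ``elementary wedges'' can contain many edges, and you give no ternary decomposition of them into single-edge leaves with small nooses.
\item A noose may traverse only the vertices of the boundary set $V_e$. Vertices of $P\cup Q$ whose edges all lie on one side must therefore be dropped, and the shortened noose must be re-verified.
\end{itemize}
Choosing a leftmost tree makes paths merge cleanly but settles none of these points.

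To make your idea work, you would split along fundamental cycles of non-tree edges of the radial tree. After placing a root in the outer face, each such cycle has at most $2k$ vertices of $G$, and in a triangulation the duals of the non-tree edges form a tree of maximum degree three. You would then still have to convert this face-based tree into an edge-based branch decomposition with valid nooses. That conversion is exactly what Tamaki's result provides, and the paper simply invokes it. It takes a BFS tree $T$ of the vertex--face incidence graph rooted at the outer face. The vertices at depth $2i-1$ form the $i$-th outer layer, so $T$ has height at most $2k$. Tamaki's algorithm then yields a sphere-cut decomposition of width at most this height in $O(h(T)\cdot n)$ time. The branchwidth bound you quote gives existence only; it provides neither the nooses nor the running time.

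Your preprocessing step is also incorrect as stated: fanning a face from a vertex of its \emph{innermost} layer can push vertices into deeper layers. Take $a\in L_1$ and $x,y,z,w\in L_2$, where $y$ has degree two and lies only on the faces $axyz$ and $xyzw$; this is realizable in a $2$-outerplane graph. Fanning the first face from $x$ adds $xz$, and fanning the second from $y$ adds $yw$. Afterwards $y$ lies on no face containing an $L_1$-vertex, so $y$ drops to $L_3$ and the outerplanarity index increases. To fix this, fan from a vertex of the \emph{outermost} layer on each face. You also must not triangulate the outer face, since that changes $L_1$. With the paper's route, no augmentation is needed, and hence no restriction argument from $G^+$ back to $G$ either.
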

\begin{proof}
    Let $\mathcal F(G)$ be the set of faces of a connected plane graph~$G$. The \emph{vertex-face incident graph} $H$ of $G$ is a bipartite graph with vertex set $V(G) \cup \mathcal F(G)$ and an edge between a vertex $v \in V(G)$ and a face~$f \in \mathcal F(G)$ for each incidence of $v$ with~$f$. Let $T$ be a BFS-tree of $H$ rooted at the outer face of $G$ and let $h(T)$ be the height of $T$.
Then~$G$ has a sphere-cut decomposition of width at most $h(T)$~\cite{tamaki:esa03}  and such a sphere-cut decomposition can be computed in $O(h(T) \cdot |V(G)|)$ time~\cite{dorn:arxiv09,tamaki:esa03}. Let $L_i$, $i\geq 0$ be the layers
of $T$. Then $L_{2i-1}$, $i \geq 0$ is an outer layering of $G$. Thus, if $G$ is $k$-outerplane then $h(T) \leq 2k$. 
\end{proof}

\subparagraph{SPQR-Trees.} 
An \emph{SPQR-tree} \cite{dt-olpt-96} $\mathcal T$ is a tree that stores the decomposition of a biconnected graph into its 3-connected components. The leaves of $\mathcal T$ are labeled $Q$ and the inner nodes are labeled $S$, $P$, or $R$ such that no two adjacent nodes of $\mathcal T$ have the same label. 
Each node~$\nu$ of $\mathcal T$ is associated with a graph, called its \emph{skeleton} skel$(\nu)$, with two designated vertices, called its \emph{poles}. If $\nu$ is an inner node of $\mathcal T$ of degree $d$, then skel$(\nu)$ has $d$ edges, called \emph{virtual edges}, which are in one-to-one correspondence with the neighbors of $\nu$. Moreover, the end-vertices of each virtual edge of skel$(\nu)$ are mapped to the poles of the neighbor of $\nu$ corresponding to this virtual edge. 
The skeleton of $\nu$ is a simple cycle, if $\nu$ is an $S$-node, $d$ parallel edges, if $\nu$ is a $P$-node, and a 3-connected graph, if $\nu$ is an $R$-node. The skeleton of a $Q$-node $\nu$ consists of two parallel edges: a \emph{real} edge and a virtual edge corresponding to the only neighbor~of~$\nu$.   

Considering $\mathcal T$ rooted at an arbitrary $Q$-node $\rho$, 
each node $\nu \neq \rho$ of $\mathcal T$ represents a graph $G^+_\rho(\nu)$ consisting of a graph $G_\rho(\nu)$ plus the virtual edge parent$(\nu)$ of skel$(\nu)$ that corresponds to the parent of $\nu$.
If $\nu$ is a $Q$-node different from $\rho$, then $G^+_\rho(\nu) =$ skel$(\nu)$. Otherwise, let $\nu_i$, $i=1,\dots,\ell$, be the children of $\nu$ and let $e_i$ be the virtual edge of skel$(\nu)$ corresponding to $\nu_i$. Then $G^+_\rho(\nu)$ is constructed from skel$(\nu)$ and $G^+_\rho(\nu_i)$, $i=1,\dots,\ell$ by \emph{merging} $e_i$ and parent$(\nu_i)$, i.e., by replacing $e_i$ in skel$(\nu)$ with $G_\rho(\nu_i)$, identifying the end-vertices of $e_i$ with the poles of $\nu_i$ according to their mapping. 
The graph $G_\rho(\rho)$ is obtained from $G_\rho(\nu)$, where $\nu$ is the only child of $\rho$, by replacing parent$(\nu)$ with a real edge. We omit the index $\rho$ if the root is clear from the context.
For a graph $G$ there is a unique SPQR-tree such that $G=G_\rho(\rho)$ for any choice $\rho$ of the root and such an SPQR-tree can be computed in~linear~time~\cite{gutwengerMutzelMarks:gd00}. 

A biconnected graph is \emph{series-parallel} if its SPQR-tree does not contain $R$-nodes. %
By making this SPQR-tree ternary while respecting a fixed embedding, we obtain the following.

\begin{restatable}[prop:sp-sc]{proposition}{scseriesparallel}
\label{prop:sp-sc}
    For a biconnected plane series-parallel graph, a sphere-cut decomposition of width $2$ can be computed in linear time.
\end{restatable}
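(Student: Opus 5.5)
We build the decomposition directly from the SPQR-tree $\mathcal T$ of $G$, which can be computed in linear time~\cite{gutwengerMutzelMarks:gd00}. We root $\mathcal T$ at a $Q$-node $\rho$ whose real edge lies on the outer face of the given embedding. Since $G$ is series-parallel, every inner node is an $S$- or $P$-node. The fixed embedding determines, for each $P$-node, the cyclic order of its children. For each $S$-node, it determines the order of the virtual edges along the path between the poles.

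The key observation is that every pertinent graph $G(\nu)$ is separated from the rest of $G$ by its two poles $s_\nu,t_\nu$. In the embedding, $G(\nu)$ occupies a region bounded by two faces: the ones immediately to the left and to the right of $G(\nu)$ between its poles. Hence the combinatorial noose $[s_\nu,f_\ell,t_\nu,f_r]$ separates $E(G(\nu))$ from the remaining edges. Both required properties hold: the vertices are distinct, and the non-crossing condition is vacuous or trivially satisfied for the two faces, even if $f_\ell=f_r$. The boundary set is exactly $\{s_\nu,t_\nu\}$, so any branch decomposition all of whose edges correspond to unions of such pieces has width~$2$.

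We therefore make the tree ternary while keeping every cut a two-terminal piece:
\begin{itemize}
\item \textbf{$P$-node with children $\nu_1,\dots,\nu_d$ in embedding order.} Replace it by a caterpillar of binary merges combining $\nu_1,\nu_2$, then the result with $\nu_3$, and so on. Each intermediate edge represents the union of consecutive children, which is again a two-terminal graph with poles $s,t$. It is bounded by the outer face of $\nu_1$ and the outer face of the last child taken, so its noose has two vertices.
\item \textbf{$S$-node with children $\nu_1,\dots,\nu_k$ in path order.} Merge consecutive children in the same caterpillar fashion. An intermediate piece is the subpath union $\nu_1\cup\dots\cup\nu_i$, with terminals $s$ and the shared vertex $v_i$. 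Its separating noose goes around it through $s$ and $v_i$ via the faces on either side of the path. This works because every edge outside the piece either belongs to $\nu_{i+1},\dots$ or lies outside the $S$-node, and all of these meet the piece only at $s$ and $v_i$.
\item \textbf{Leaves and root.} $Q$-nodes become leaves. Contracting degree-2 nodes and handling the root edge then yields a ternary tree whose leaves are the edges of $G$.
\end{itemize}

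The main obstacle is the $S$-node case. There one must check that the separating noose is a valid combinatorial noose: both faces along it are incident to $s$ and $v_i$, and the traversal order on a face used twice is non-crossing. The boundary set must also be exactly $\{s,v_i\}$ and not larger, for example when $s$ or $v_i$ are incident to edges on both sides, which is allowed. The argument is a local inspection of the embedding around the subpath.

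Linear time follows because each $S$- or $P$-node of degree $d$ contributes $O(d)$ tree nodes. The boundary pair and the two faces for each tree edge can be read off in constant time from the embedded SPQR-tree, giving $O(n)$ overall.
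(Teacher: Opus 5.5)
Your proposal is correct and follows essentially the paper's route. Both arguments make the SPQR-tree ternary by replacing each $S$- and $P$-node with a ternary tree whose subtrees group consecutive virtual edges (your caterpillars are a special case), so every tree edge corresponds to a split pair whose noose visits only those two poles. One small slip: the hedge ``even if $f_\ell=f_r$'' would be wrong if that case occurred, since a noose passing twice through one face between the same two vertices separates nothing. It is harmless, though, because in a biconnected graph the two faces flanking a split component are always distinct.
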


\section{Problem Complexity}\label{sec:complexity}

\begin{theorem}\label{thm:hardness}
\textsc{$h$-Non-Orthogonal Faces} is \NP-complete both in the variable and in the fixed embedding case, even if the input graph is 3-connected. 
\end{theorem}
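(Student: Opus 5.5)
Membership in \NP follows from \cref{lem:char}: a certificate is an embedding (a rotation system, or nothing in the fixed-embedding case) together with a set $F'$ of at most $h$ faces. We check in polynomial time that every vertex has residual degree at most four w.r.t.\ $F'$. Since the input graphs we build are 3-connected, the embedding is unique up to the choice of outer face. The choice of outer face does not affect which sets of faces exist. So it suffices to prove hardness for the fixed-embedding version restricted to 3-connected graphs, and this yields hardness for both settings.

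For hardness, I would reduce from \textsc{Vertex Cover} in planar cubic (or max-degree-3) graphs, which is \NP-complete. The underlying combinatorial problem is a covering problem: a vertex of degree $d>4$ needs at least $d-4$ of its incident edges to be ``covered'', i.e., both incident faces must be chosen. Given a planar cubic graph $H$ with a planar embedding, I would form a plane graph whose faces play the role of vertices of $H$ and whose high-degree vertices play the role of edges of $H$. This is essentially the dual $H^*$, suitably augmented. Concretely:
\begin{itemize}
\item each vertex $x$ of $H$ corresponds to a face $f_x$;
\item each edge $xy$ of $H$ yields a gadget vertex $v_{xy}$ located where $f_x$ and $f_y$ meet;
\item $v_{xy}$ is made incident to exactly five edges that are each shared between $f_x$ and $f_y$, i.e., a ``fan'' of parallel-like paths separating $f_x$ from $f_y$, subdivided to stay simple.
\end{itemize}
The point of the fan is that $v_{xy}$ has degree $5$ plus a few other edges. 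Choosing either $f_x$ or $f_y$ alone does not suffice, because each fan edge is also incident to a small intermediate face between consecutive fan edges. The design must therefore make $v_{xy}$ satisfiable by picking one of $f_x$, $f_y$. Cleaner still, $v_{xy}$ can be given exactly five incident edges: the edges lying on the boundary between $f_x$ and $f_y$ plus others, arranged so that choosing $f_x$ together with all tiny faces is cheap only through a designated option.

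I expect the gadget design to be the main obstacle. It must achieve three things at once:
\begin{itemize}
\item it keeps the graph simple and 3-connected;
\item every auxiliary vertex has degree at most four, or is otherwise trivially satisfied;
\item every solution can be normalized so that it selects only ``vertex faces'' $f_x$, with a fixed additive offset $c\cdot|E(H)|$.
\end{itemize}
My first attempt would use the medial-graph style construction. Around each edge $xy$, place a vertex $v_{xy}$ of degree $5$ whose incident faces alternate so that exactly one incident edge (call it $e_{xy}$) has $f_x$ and $f_y$ as its two sides. All other edges at $v_{xy}$ border small triangles that must be paid for. The gadget is tuned so that $v_{xy}$ is fixed either by choosing $f_x$ and $f_y$ or by choosing a private triangle pair, and the private option is deliberately more expensive.

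An alternative that avoids this asymmetry is to reduce from a planar covering problem where the requirement is naturally ``both faces of some edge''. An example is planar \textsc{Edge Cover}-type or planar \textsc{Face Cover} variants. I would then use an exchange argument: any solution choosing a gadget-internal face can be transformed, without increasing its size, into one choosing only the $f_x$'s.

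The correctness proof then has two directions. A vertex cover $C$ of size $k$ gives $F'=\{f_x : x\in C\}$ plus the forced gadget faces, of total size $k+c|E(H)|$. Conversely, given a solution of that size, apply the exchange argument and read off a cover. Finally, 3-connectivity is obtained by triangulating the remaining large faces with degree-$\le 4$-preserving wheels or subdivided grids, so that no new constraint arises.
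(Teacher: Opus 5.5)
Your \NP-membership argument, your reduction of the variable-embedding case to the fixed-embedding case via 3-connectivity, and your overall plan all agree with the paper. In that plan, vertices of $H$ become faces $f_x$ of a dual-like graph, edges of $H$ become degree-5 vertices, and the threshold is $k+c\,|E(H)|$.

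The genuine gap is the gadget, which is the entire content of the hardness proof. You never fix one, and the concrete variants you sketch fail:
\begin{itemize}
\item The fan variant, in which several edges at $v_{xy}$ are each bordered by both $f_x$ and $f_y$, cannot exist in a 3-connected plane graph, because there two faces share at most one edge.
\item Your medial-graph variant, in which exactly one edge at $v_{xy}$ has $f_x$ and $f_y$ on its two sides, encodes the wrong logic. The faces around $v_{xy}$ are then cyclically $f_x,f_y,T_a,T_b,T_c$. So $v_{xy}$ needs no private face if both $f_x,f_y$ are chosen, one if exactly one is, and two otherwise. Choosing the vertex faces of a set $S\subseteq V(H)$ therefore costs $|S|+2|E(H)|-\sum_{x\in S}\deg_H(x)$ in total. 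For cubic $H$ this is minimized by $S=V(H)$, so the instance no longer encodes \textsc{Vertex Cover}. An edge bordered by both endpoint faces rewards taking both endpoints.
\item Your 3-connectivity step also fails. Triangulating the large faces with wheels would split the vertex faces $f_x$ themselves and create hubs of degree greater than four.
\end{itemize}

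The missing idea is that \emph{no} edge at the gadget vertex may be bordered by both $f_x$ and $f_y$, while \emph{every} edge at it must border a face private to that gadget. The paper achieves this in four steps:
\begin{itemize}
\item It reduces from \textsc{Vertex Cover} on subdivisions of 3-connected planar graphs, so that $H^*$ is 3-connected.
\item Replacing each vertex of $H^*$ by a cycle yields a simple, cubic, 3-connected graph whose remaining faces are exactly the $f_x$.
\item Each dual edge $e$ is subdivided by a vertex $c^e$. This vertex is joined to subdivision vertices on both cycle edges at one end of $e$ and on one cycle edge at the other end.
\item The faces around $c^e$ are then cyclically $T_1,f_x,T_3,f_y,T_2$, with three private triangles.
\end{itemize}
Every way of satisfying $c^e$ therefore uses at least one private face, plus either $f_x$, $f_y$, or a second private face. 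In the counting argument, that second private face can be charged to an arbitrary endpoint. This gives exactly the threshold $k+|E(H)|$. 3-connectivity comes from Barnette--Gr\"unbaum operations rather than from triangulating.
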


\begin{figure}[tb]
\centering
    \begin{subfigure}[b]{0.45\textwidth}
      \centering
      \includegraphics[page=10]{figures/hard3connected.pdf}
	   \subcaption{}
        \label{fig:hardGraph}
    \end{subfigure}
    \begin{subfigure}[b]{0.45\textwidth}
      \centering
      \includegraphics[page=16]{figures/hard3connected.pdf}
	\subcaption{}
    \label{fig:hardGadget}
    \end{subfigure}
\\
    \begin{subfigure}[b]{0.45\textwidth}
      \centering
      \includegraphics[page=11]{figures/hard3connected.pdf}
	   \subcaption{}
       \label{fig:hardVC}
    \end{subfigure}
    \begin{subfigure}[b]{0.45\textwidth}
      \centering
      \includegraphics[page=18]{figures/hard3connected.pdf}
	   \subcaption{}
        \label{fig:hardFC}
    \end{subfigure}
    \caption{Reduction from \textsc{Vertex Cover} to \textsc{$h$-Non-Orthogonal Faces}
    (a) Graph $H$ (black) and dual $H^*$ (blue).
    (b) Graph $G$, including $G'$ (in blue).
    (c) Vertex cover of $H$ of size $3$ (orange encircled).
    (d) Corresponds to a witness with $3+8$ faces (shaded).
    }\label{fig:hard}
\end{figure}

\begin{proof}
	Given a connected planar graph $G$ with a planar embedding and a set $F'$ of faces, it can be decided in polynomial time whether $|F'| \leq h$ and whether the residual degree of each vertex w.r.t. $F'$ is at most four. So, the problem is in NP.
	To prove NP-hardness, we reduce from vertex cover, which is NP-complete even if restricted to subdivisions of 3-connected planar graphs~\cite{mohar01}. See \cref{fig:hard}.  Let $H$ be a subdivision of a 3-connected planar graph. Let $H^*$ be the dual of $H$ w.r.t.\ the unique planar embedding of~$H$. 
	Observe that $H^*$ is 3-connected, but not necessarily simple.
	Replace each vertex in $H^*$ of degree $d$ by a cycle of length $d$. More precisely, let $v$ be a vertex of $H^*$ with incident edges $e_1,\dots,e_d$ in this order around~$v$. Replace $v$ by a cycle $C_v: v_1,\dots,v_d$ and for $i=1,\dots,d$, let $e_i$  be incident to $v_i$ instead of $v$. The resulting graph $G'$ is simple, 3-connected, and all vertices have degree~3. The edges of~$G'$ are either edges of $C_v$, $v \in V(H^*)$ or stem from $E(H^*)$. The faces are either bounded by some $C_v$,   $v \in V(H^*)$, or they are bounded by edges that stem from $E(H^*)$ and correspond to faces of $H^*$, i.e., to vertices of $H$. Let $f_v'$ be the face of~$G'$ that corresponds to the vertex~$v$ of $H$. 

    For each edge $e=\{u,v\}$ of $E(H^*)$, we construct a vertex of degree five. Tu this end we add four vertices: a vertex $c^e$ that subdivides $e$ and three vertices that subdivide edges of $C_v$ and $C_u$.  More precisely, let $v'$ and $u'$ be the end vertices of $e$ in $C_v$ and $C_u$,  let $e_1$ and $e_2$ be the two edges incident to $v'$ in $C_v$, and let $e'$ be an edge incident to $u'$ in $C_u$. Then we subdivide $e_1$ with a vertex $v^e_1$, the edge $e_2$ with a vertex $v^e_2$, and $e'$ with a vertex $v^e_3$. We add edges between $c^e$ and   $v^e_i$, $i=1,2,3$. The resulting graph $G$ is 3-connected: First, we subdivide two edges ($e$ and $e_1$) and add an edge between the two subdivision vertices. Then, we subdivide an edge ($e_2$ or $e'$) and add an edge from the subdivision vertex to a vertex ($c^e$) that was not an end vertex of the subdivided edge. Both are Barnette-Gr\"unbaum operations~\cite{bg-stc3p-69}, which maintain 3-connectivity.

    The only vertices of degree greater than three of $G$ are the central vertices $c^e$,  $e \in E(H^*)$. 
    Let $E_f$ be the set of edges bounding a face $f$. 
    Then there is exactly one face $f_v$ in $G$ that contains the vertices $c^e$, $e \in E_{f'_v}$. We call $f_v$ the face of $G$ corresponding to $v$.

	We show that $H$ has a vertex cover with $k$ vertices if and only if $G$ has a planar drawing with at most $k+E(H)$ non-orthogonal faces. 
	Assume first that $H$ has a vertex cover $V'$ with $k$ vertices. The $k$ faces $f_v$, $v \in V'$ together contain all vertices $c^e$,  $e \in E(H^*)$ on its boundary. For each vertex $c^e$, we choose one edge $e'$ incident to $c^e$ embedded such that it is incident to one of the faces of $f_v$, $v \in V'$. The edge $e'$ is incident to one other face. So at most $k + |E(H)|$ faces are incident to an edge that we remove in order to obtain a graph with maximum degree four.
	
	Consider now a drawing of $G$ in which $F'$ is the set of faces that are not orthogonal and assume $|F'| \leq k + |E(H)|$. Starting with an empty set, we construct a set $V''$ of size $k$ such that all faces in $F'' := \{f_v; v \in V''\}$ together contain all vertices $c^e$, $e \in E(H^*)$, on their boundary.  This implies that $V''$ is a vertex cover of $G$.   We now describe the construction of~$V''$: Each vertex $c^e$, $e \in E(H^*)$ is incident to at least one edge contained in the boundary of two faces in $F'$. Pick for each $c^e$ one such edge and let $E'$ be the resulting set of edges. For each edge $e \in E'$ there is at least one incident face in $F'$ that is not incident to any other $c_{e'}$, $e' \in E(H^*) \setminus \{e\}$. If there is only one such face then the other face incident to $e$ is $f_v$ for some $v \in V(H)$ and we add $v$ to $V''$.   
Otherwise, arbitrarily pick one of the two vertices $v \in V(H)$ such that $f_v$ is incident to $c^e$ and add $v$ to $V''$.  In any case, for each vertex $c^e$ there is an exclusive face for which we did not add a vertex to $V''$. Thus, $|V''| \leq |F'| - |E(H)| = k$.
\end{proof}

\section{Fixed-Parameter Tractability for Fixed Embedding}\label{section:fpt}

In this section, we consider the case where an embedding of $G$ is fixed. 
We show %
FPT algorithms parameterized by the treewidth and the solution size.
Moreover, for the special case of series-parallel graphs or fixed outerplanarity index we provide a linear-time algorithm.

\begin{theorem}\label{thm:fpt_sc}
For $n$-vertex plane graphs, \textsc{Non-Orthogonal Face Minimization$^*$} can be solved in time~$2^{O(w)}n + \tau$, provided that a sphere-cut decomposition of $G$ of width $w$ can be computed in time $\tau$.
\end{theorem}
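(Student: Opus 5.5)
We perform bottom-up dynamic programming over the rooted sphere-cut decomposition $\mathcal T$. For every node $\mu$ we compute a table indexed by a bounded-size \emph{signature} on the noose $\gamma_\mu$. The table entry stores the minimum number of faces in $\ifaces{\mu}$ that must be chosen so that the choice is consistent with the signature and every vertex of $V'\cap\iv{\mu}$ has residual degree at most four.

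A solution $F'$ interacts with the inside of $\gamma_\mu$ through two kinds of information. The first is which boundary faces $\bfaces{\mu}$ are chosen. Since a combinatorial noose visits at most $w$ faces (it alternates vertices and faces, and $|\bv{\mu}|\le w$), this gives at most $2^{w}$ options. The second is, for each boundary vertex $b\in\bv{\mu}$, how many residual edges it has inside the subtree. Only values $0,\dots,4$ and ``$\ge 5$'' matter, which gives $6^{w}$ options. Which edges at $b$ are residual depends only on which faces around $b$ are chosen. On the inside, these faces are either interior faces (decided inside) or boundary faces (fixed by the signature). So the inside count is determined by the inside choices together with the signature. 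The table therefore has $2^{O(w)}$ entries per node.

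The computation proceeds as follows.
\begin{itemize}
\item \textbf{Leaves.} A leaf $\mu$ holds a single edge $e=uv$. Its two incident faces are boundary faces. For each choice of these faces, $e$ is residual unless both faces are chosen, and we record the count $0$ or $1$ at $u$ and $v$.
\item \textbf{Join.} At an internal node $\mu$ with children $\nu,\lambda$, a face in $\bfaces{\nu}\cup\bfaces{\lambda}$ lies either in $\ifaces{\mu}$ or in $\bfaces{\mu}$. Faces of $\bfaces{\nu}\cap\bfaces{\lambda}$ that become interior to $\mu$ are now paid for: we add $1$ for each chosen one. We combine pairs of child signatures that agree on the chosen status of shared faces. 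Counts at vertices in $\bv{\nu}\cap\bv{\lambda}$ are added and capped. A vertex of $V'$ that leaves the boundary, i.e.\ moves into $\iv{\mu}$, must have total count at most $4$; otherwise the pair is discarded.
\item \textbf{Root.} At the root the two children partition all edges. The answer is the minimum over compatible pairs, counting every face chosen exactly once and checking all remaining vertices of $V'$.
\end{itemize}
A solution is recovered by standard backtracking. Correctness follows because residual degree is a sum over edges, and each edge is counted in exactly one leaf. Each face is paid for exactly when it first becomes interior, or at the root.

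The main obstacle is the join running time. Naively pairing child tables costs $2^{O(w)}\cdot 2^{O(w)}=2^{O(w)}$ per node. This is still single-exponential, and with $O(n)$ nodes in $\mathcal T$ it gives total time $2^{O(w)}n$ plus $\tau$ for computing the decomposition.

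The delicate point is being careful about which faces are boundary faces and about faces that are traversed several times by the noose. Here we use the property of combinatorial nooses that repeated faces do not interleave. Together with the bound of at most $w$ distinct boundary faces per noose, this keeps the face-subset index at $2^{w}$. We also check that the three nooses of $\mu,\nu,\lambda$ together involve $O(w)$ vertices and faces, so compatibility can be tested in $O(w)$ time per pair.
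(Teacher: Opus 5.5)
Your proposal is correct and follows essentially the same route as the paper: a dynamic program over the rooted sphere-cut decomposition whose states record one bit per boundary face and a capped residual degree per boundary vertex, with child tables combined pairwise in $2^{O(w)}$ time per node. The only differences are in bookkeeping. You charge a face when it becomes interior, whereas the paper counts chosen boundary faces in the table and subtracts the shared-face correction $s_\mu(\delta_\nu,\delta_\lambda)$. You also track every boundary vertex with an extra ``$\ge 5$'' state, whereas the paper tracks only vertices of $V'$ with values $0,\dots,4$.
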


\begin{proof}
Let~$(G,V')$ be an instance of \textsc{Non-Orthogonal Face Minimization}$^*$, where $G$ is a plane graph with a fixed planar embedding~$\mathcal E$ and~$V' \subseteq V$ are the vertices whose residual degree should be at most~$4$. We denote the set of faces of~$\mathcal E$ by~$\mathcal F$.  We describe an algorithm that uses dynamic programming along a sphere-cut decomposition of~$G$, rooted at a node $\rho$ as described in \Cref{se:pre}.

Consider a node $\mu$ of $\cal T$ and let $G_\mu$ be the subgraph of $G$ induced by the edges that are the leaves of the subtree of $\cal T$ rooted at $\mu$. 
When considering the subinstance corresponding to~$\mu$, the \emph{relevant faces} are~$\rfaces{\mu} = \ifaces{\mu} \cup \bfaces{\mu}$ and the \emph{relevant vertices} are $\rv{\mu} = \iv{\mu} \cup \bv{\mu}$, i.e., we disregard the outer faces and vertices, respectively.
Let~$X \subseteq \mathcal F$ be a set of faces.  The \emph{$\mu$-residual degree} of a vertex~$v$ of $G_\mu$ with respect to~$X$ is the number of edges of $G_\mu$ incident to~$v$ that are not incident to two faces in~$X$.  We denote it by~$d_{\mu,X}(v)$.  Observe that the $\mu$-residual degree of $v$ with respect to~$X$ depends only on~$X \cap \mathcal \rfaces_{\mu}$.

Let~$F \subseteq \mathcal F$  be a set of faces such that all vertices of $V'$ have residual degree at most~$4$ with respect to~$F$.  Clearly, $F_\mu = F \cap \rfaces{\mu}$ has the property that all vertices in~$\rv{\mu} \cap V'$ have residual degree at most~$4$ with respect to $F_\mu$.
We call a subset of~$\rfaces{\mu}$ with this property~\emph{$\mu$-feasible}.
Consider a $\mu$-feasible set~$X \subseteq \mathcal F_\mu$. If~$X$ has the property that~(i) $X \cap \bfaces{\mu} = F \cap \bfaces{\mu}$, and (ii) $d_{\mu,X}(b)=d_{\mu,F}(b)$ for all boundary vertices $b \in \bv{\mu} \cap V'$, then all vertices in $V'$ have residual degree at most~$4$ with respect to~$F' := (F \cap \ofaces{\mu}) \cup X$.
In other words, by replacing the set of faces of $F$ that lie in the interior or on the boundary of the noose of $\mu$ with the faces in $X$, we obtain a set $F' \subseteq \cal F$ such that all the vertices of $V'$ have residual degree at most $4$ with respect to $F'$.

Thus, all the relevant information regarding a~$\mu$-feasible subset $X$ 
of faces can be encoded by a \emph{descriptor} $\delta \in \{0,1\}^{|\bfaces{\mu}|} \times \{0,1,2,3,4\}^{|\bv{\mu} \cap V'|}$ that stores (a) one bit per boundary face, indicating whether that boundary face is contained in $X$, and (b) one number per boundary vertex in~$V'$ that encodes the~$\mu$-residual degree of the respective boundary vertex with respect to $X$.  We denote by~$\desc{\mu} := \{0,1\}^{|\bfaces{\mu}|} \times \{0,1,2,3,4\}^{|\bv{\mu} \cap V'|}$ the set of all possible descriptors.  For a set~$X \subseteq \rfaces{\mu}$, we denote its descriptor by~$\Delta_\mu(X)$.
Given two $\mu$-feasible sets~$X,X' \subseteq \rfaces{\mu}$ with the same descriptor $\Delta_\mu(X) = \Delta_\mu(X')$ but~$|X| < |X'|$, any feasible solution~$F$ with~$F \cap \rfaces{\mu} = X'$ 
with~$F \cap \rfaces{\mu} = X'$ 
can be improved by taking~$(F \setminus X') \cup X$; the fact that~$(F \setminus X') \cup X$ is a feasible solution follows from the previous paragraph. 

To find an optimal solution, we apply dynamic programming on a sphere-cut decomposition~$\cal T$ of $G$. 
Our goal is to compute, for each node~$\mu$, a table~$T_\mu$ with one entry $T[\delta]$ for each descriptor $\delta \in \desc{\mu}$ that contains the minimum size of a $\mu$-feasible set $X \subseteq \rfaces{\mu}$ with~$\Delta_\mu(X) = \delta$ and~$\infty$ if no such set exists.
\begin{figure}[tb]
\centering
      \includegraphics[page=8, width=.8\textwidth]{figures/h-orthogonal-new.pdf}
 \caption{
 (top) Faces of $\rfaces{\mu}$ where the faces of a $\mu$-feasible set $X_\mu$ are shaded gray. (bottom left) Faces of $\rfaces{\nu}$ where the faces of $X_\nu = X_\mu \cap \rfaces{\nu}$ are shaded gray. (bottom right) Faces of $\rfaces{\lambda}$ where the faces of $X_\lambda = X_\mu \cap \rfaces{\lambda}$ are shaded gray.
 The descriptor of $\delta_\nu = \Delta_\nu(X_\nu)$ is $(0,1,0,1,0,1,0,0,1),(3,2,2,1,2,1,1,3,2)$, the descriptor of $\delta_\lambda = \Delta_\lambda(X_\lambda)$ is $(0,1,0,0,0,1,0,1),(1,1,2,1,2,2,1,1)$.
 }
\label{fig:sc-merge}
\end{figure}
In the base case, where~$G_\mu$ consists of a single edge~$e$, the table $T_\mu$ can be computed straightforwardly.  For example, if neither of the endpoints~$b_1,b_2$ of~$e$ are in~$V'$, then we have~$T_\mu[(0,0),(1,1)] = 0$, $T_\mu[(1,0),(1,1)] = T_\mu[(0,1),(1,1)] = 1$, and~$T_\mu[(1,1),(0,0)] = 2$ and all other entries of~$T_\mu$~are~$\infty$.  The other cases work similarly.

Now consider a node~$\mu$ of the sphere-cut decomposition $\cal T$ with children~$\nu$ and~$\lambda$; refer to \cref{fig:sc-merge}.   Note that~$\bv{\mu} \subseteq \bv{\nu} \cup \bv{\lambda}$.  
Let $X_\nu$ and $X_\lambda$ be a $\nu$-feasible and a $\lambda$-feasible subset of $\rfaces{\nu}$ and $\rfaces{\lambda}$,  respectively, such that
$X_\nu$ and $X_\lambda$ agree on their shared boundary faces, i.e., $X_\nu \cap (\bfaces{\nu} \cap \bfaces{\lambda}) = X_\lambda \cap (\bfaces{\nu} \cap \bfaces{\lambda})$.  Consider the set~$X_\mu = X_\nu \cup X_\lambda$.  Then~$X_\mu$ is~$\mu$-feasible if and only if $d_{G_\nu,X_\nu}(b) + d_{G_\lambda,X_\lambda}(b) \le 4$ for each shared boundary vertex $b \in \bv{\nu} \cap \bv{\lambda} \cap V'$.  And, since the only faces that are shared by~$X_\nu$ and~$X_\lambda$ belong to $\bfaces{\nu} \cap \bfaces{\lambda}$, we have $|X_\mu| = |X_\nu| + |X_\lambda| - |(X_\nu \cup X_\lambda) \cap (\bfaces{\nu} \cap \bfaces{\lambda})|$.  We note that the conditions necessary to ensure that~$X_\mu$ is $\mu$-feasible depend only on the descriptors~$\Delta_\nu(X_\nu)$ and~$\Delta_\lambda(X_\lambda)$ of~$X_\nu$ and~$X_\lambda$, respectively.  We call two descriptors~$\delta_\nu \in \desc{\nu}$ and~$\delta_\lambda \in \desc{\lambda}$ \emph{compatible}  if they agree on the shared boundary faces in~$\bfaces{\nu} \cap \bfaces{\lambda}$ and the sum of the two residual degrees in~$\delta_\nu$ and~$\delta_\lambda$ is at most $4$ for each shared boundary vertex in~$\bv{\nu} \cap \bv{\lambda} \cap V'$.  It is immediate from the definition  that the union of a $\nu$-feasible set of faces~$X_\nu$ and a~$\lambda$-feasible set of faces~$X_\lambda$ is~$\mu$-feasible if and only if $\delta_\nu$ and~$\delta_\lambda$ are compatible.  Moreover, the descriptor~$\Delta_\mu(X_\nu \cup X_\lambda)$ is unique and depends only on~$\delta_\nu$ and~$\delta_\lambda$, and we denote it by~$\delta_\nu \oplus \delta_\lambda$ in this case.
Finally, we note that the correction term~$|(X_\nu \cap X_\lambda) \cap (\bfaces{\nu} \cap \bfaces{\lambda})|$ is also determined by~$\delta_\nu$ and~$\delta_\lambda$, and we denote this number by~$s_\mu(\delta_\nu,\delta_\lambda)$.  
We have the following claim.

\begin{claim}
The entries of the table $T$ satisfy the following recurrence
\begin{equation}T_\mu[\delta_\mu] = \min_{\substack{\delta_\nu \in \desc{\nu}, \delta_\lambda \in \desc{\lambda}\\ \delta_\nu \oplus \delta_\lambda = \delta_\mu}} T_\nu[\delta_\nu] + T_\lambda[\delta_\lambda] - s_\mu(\delta_\nu,\delta_\lambda) \, .\label{eq:dp-table}\end{equation}
\end{claim}

\begin{claimproof}
We first show ``$\le$'':
Let~$\delta_\nu \in \desc{\nu}$, $\delta_\lambda \in \desc{\lambda}$ be such that~$\delta_\mu = \delta_\nu \oplus \delta_\lambda$.  We show~$T_\mu[\delta_\mu] \le T_\nu[\delta_\nu] + T_\lambda[\delta_\lambda] - s_\mu(\delta_\nu,\delta_\lambda)$.
If~$T_\nu[\delta_\nu] = \infty$ or~$T_\lambda[\delta_\lambda] = \infty$,  then this is trivial.  So assume~$T_\eta[\delta_\eta] < \infty$ for~$\eta \in \{\mu,\lambda\}$. 
Then, for~$\eta \in \{\nu,\lambda\}$ there exists an $\eta$-feasible set~$X_\eta \subseteq \rfaces{\eta}$ with $\Delta_\eta(X_\eta) = \delta_\eta$ and $|X_\eta| = T_\eta[\delta_\eta]$.  Since~$\delta_\nu$ and~$\delta_\lambda$ are compatible and since $X_\eta$ is $\eta$-feasible for $\eta \in \{\nu,\lambda\}$, $X_\nu \cup X_\lambda$ is $\mu$-feasible and satisfies~$\Delta_\mu(X_\nu \cup X_\lambda) = \delta_\mu$. 
Since~$T_\mu[\delta_\mu]$ is the smallest size of a set with this property, we obtain~$T_\mu[\delta_\mu] \le |X_\nu \cup X_\lambda| = |X_\nu| + |X_\lambda| - s_\mu(\delta_\nu,\delta_\lambda)$.  %

Now we show ``$\ge$'': 
Let~$\delta_\mu \in \Omega(\mu)$.  If~$T_\mu[\delta_\mu] = \infty$, then the statement holds trivially. So assume $T_\mu[\delta_\mu] < \infty$.  Then there exists a $\mu$-feasible set~$X_\mu \subseteq \rfaces{\mu}$ with~$\Delta_\mu(X_\mu) = \delta_\mu$ and~$|X_\mu| = T_\mu[\delta_\mu]$.  For~$\eta \in \{\nu,\lambda\}$ let~$X_\eta = X_\mu \cap \rfaces{\eta}$ and let~$\delta_\eta = \Delta_\eta(X_\eta)$.  Then, we have~$T_\eta[\delta_\eta] \le |X_\eta|$ for~$\eta \in \{\nu,\lambda\}$ and~$|X_\mu| = |X_\nu| + |X_\lambda| - s_\mu(\delta_\nu,\delta_\lambda)$.  We thus obtain~$T_\mu[\delta_\mu] \ge T_\nu[\delta_\nu] + T_\lambda[\delta_\lambda] - s_\mu(\delta_\nu,\delta_\lambda)$.  Since~$\delta_\nu$ and~$\delta_\lambda$ are compatible and~$\delta_\mu = \delta_\nu \oplus \delta_\lambda$, this pair is considered when taking the minimum in \eqref{eq:dp-table}. %
\end{claimproof}

Following the recurrence (\ref{eq:dp-table}), we can compute the entries of the tables $T_\mu$ from the tables $T_\nu$ and $T_\lambda$ in a bottom-up fashion.  We note that, by definition, the root~$\rho$ has no boundary vertices or faces and thus it has only a single descriptor, which we denote by~$\bot$.  The entry~$T_\rho[\bot]$ then stores the size of an optimal $\rho$-feasible set of faces, i.e., an optimal solution to \textsc{Non-Orthogonal Face Minimization$^*$}.  

For the running time, observe that each node~$\mu$ has at most~$w$ boundary vertices and faces and thus at most~$2^w \cdot 5^w = 10^w$ different descriptors. Hence, to compute $T_\mu[\delta]$, we need to consider at most~$10^{2w}=2^{2w\log_210}$  descriptor pairs, each of which can be evaluated in $O(w)$ time.
Since the tree $\cal T$ contains $O(n)$ nodes, the time to fill all the tables is thus~$2^{O(w)}n +\tau$, provided that a sphere-cut decomposition of $G$ of width at most $w$ can be computed in time~$\tau$.
\end{proof}

We use \cref{thm:fpt_sc} to derive results for graph classes where a sphere-cut decomposition can be computed efficiently. By \cref{th:sc-construction}, an optimal sphere-cut decomposition can be computed in cubic time for general plane graphs. By \cref{prop:sp-sc}, if the graph is series-parallel, a sphere-cut decomposition of width 2 can be computed in linear time. By \cref{prop:sc-kouterplane}, a sphere-cut decomposition of an $n$-vertex $k$-outerplane graph of $O(k)$ width can be computed in $O(kn)$ time.
We formalize these implications in the following corollaries.

\begin{corollary}\label{cor:fpt_tw}
For an $n$-vertex plane graph $G$, \textsc{Non-Orthogonal Face Minimization} can be solved in~$2^{O(\tw(G))}n + O(n^3)$ time.
\end{corollary}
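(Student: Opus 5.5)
The plan is to obtain the corollary by combining \cref{thm:fpt_sc} with \cref{th:sc-construction} and the relation between branchwidth and treewidth. The first step is to reduce \textsc{Non-Orthogonal Face Minimization} to \textsc{Non-Orthogonal Face Minimization$^*$} by taking $V' = V(G)$. With this choice, a set $F'$ of faces is feasible for the starred problem exactly when every vertex has residual degree at most four w.r.t.\ $F'$. By \cref{lem:char}, this is the unstarred problem in the fixed embedding.

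Next, apply \cref{th:sc-construction} to the connected plane graph $G$ to compute a sphere-cut decomposition of width $w \in O(\bw(G))$ in time $\tau \in O(n^3)$. By the Robertson--Seymour inequality $\bw(G)-1 \le \tw(G)$ quoted in the preliminaries, $\bw(G) \le \tw(G)+1$, so $w \in O(\tw(G)+1)$.

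Plugging this into \cref{thm:fpt_sc} gives running time $2^{O(w)} n + \tau = 2^{O(\tw(G))} n + O(n^3)$. The additive constant in $w$ only changes the constant in the exponent, so it is absorbed.

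There is no real obstacle; the only care needed is at the edges of the hypotheses. First, \cref{thm:fpt_sc} assumes the sphere-cut decomposition is given, while \cref{th:sc-construction} supplies it in cubic time, so the construction cost appears additively. Second, degenerate cases (graphs with fewer than two edges, where the branch decomposition is trivial) must be handled directly in constant time. Third, the treewidth need not be known in advance: the construction in \cref{th:sc-construction} does not require it as input, so the bound holds for the unknown $\tw(G)$.
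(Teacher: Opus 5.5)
Your proposal is correct and is essentially the paper's argument: instantiate \cref{thm:fpt_sc} with $V'=V(G)$, using the sphere-cut decomposition of width $O(\bw(G)) \subseteq O(\tw(G)+1)$ that \cref{th:sc-construction} computes in $O(n^3)$ time. The appeal to \cref{lem:char} is unnecessary, since with $V'=V(G)$ the starred problem coincides with the unstarred one by definition.
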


\begin{corollary}\label{cor:series-parallel}
For plane series-parallel graphs, \textsc{Non-Orthogonal Face Minimization} can be solved in linear time.
\end{corollary}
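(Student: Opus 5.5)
The plan is to apply \cref{thm:fpt_sc} directly, with the decomposition supplied by \cref{prop:sp-sc}. Series-parallel graphs are defined in \cref{se:pre} as biconnected graphs whose SPQR-tree has no $R$-nodes, so the input is a biconnected plane series-parallel graph $G$ with $n$ vertices. By \cref{prop:sp-sc}, a sphere-cut decomposition of $G$ of width $w=2$ can be computed in time $\tau\in O(n)$.

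I would then solve the instance $(G,V(G))$ of \textsc{Non-Orthogonal Face Minimization$^*$}. Taking $V'=V(G)$ makes this exactly \textsc{Non-Orthogonal Face Minimization}: every vertex must have residual degree at most four, and by \cref{lem:char} this is the fixed-embedding optimization problem in question. \cref{thm:fpt_sc} gives running time $2^{O(w)}n+\tau$. With $w=2$ the factor $2^{O(w)}$ is a constant; concretely, each table has at most $10^{2}$ descriptors and each merge examines at most $10^{4}$ pairs at $O(1)$ cost. The total is therefore $O(n)+O(n)=O(n)$.

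The step most likely to need care is the size of the decomposition tree. The time bound in \cref{thm:fpt_sc} rests on $\mathcal T$ having $O(n)$ nodes. This holds because the leaves of $\mathcal T$ are the edges of $G$, and a simple planar graph has $|E(G)|\le 3n-6$, so $\mathcal T$ has $O(n)$ nodes. The input also needs no preprocessing, since biconnectivity is part of the definition of series-parallel used here. Beyond these two checks the corollary is a direct combination of the two earlier results.
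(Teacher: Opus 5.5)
Your proposal is correct and matches the paper's argument. The paper also obtains the corollary by plugging the linear-time width-$2$ sphere-cut decomposition of \cref{prop:sp-sc} into \cref{thm:fpt_sc} with $V'=V(G)$, which gives $2^{O(2)}n+O(n)=O(n)$ time; your extra checks (the $O(n)$ size of the decomposition tree, and biconnectivity being part of the paper's definition of series-parallel) are exactly what that argument uses implicitly.
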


\begin{corollary}\label{cor:fpt_outerplanar}
For $n$-vertex $k$-outerplane graphs, %
\textsc{Non-Orthogonal Face Minimization}$^*$ can be solved in $2^{O(k)}n$ time.
\end{corollary}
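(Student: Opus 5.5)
This follows by combining \cref{thm:fpt_sc} with \cref{prop:sc-kouterplane}. Given an $n$-vertex $k$-outerplane graph $G$ with the fixed embedding and a set $V'\subseteq V(G)$, I first apply \cref{prop:sc-kouterplane}. This yields a sphere-cut decomposition $\mathcal T$ of $G$ of width $w \le 2k$ in time $\tau \in O(kn)$. Running the dynamic program of \cref{thm:fpt_sc} on $\mathcal T$ then solves \textsc{Non-Orthogonal Face Minimization}$^*$ in time $2^{O(w)}n + \tau = 2^{O(2k)}n + O(kn)$. Since $k \le 2^{k}$, we have $O(kn) \subseteq 2^{O(k)}n$, so the total is $2^{O(k)}n$.

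Two details need care. First, \cref{prop:sc-kouterplane} and the decomposition framework assume a connected plane graph. This matches the paper's standing convention that all graphs are connected, so the decomposition can be applied directly to $G$. If one wants to be careful about degenerate cases, such as a single edge or a single vertex, the problem is trivial there: a single vertex has residual degree $0$, and a single edge gives a tree with one face, which can be handled in constant time.

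Second, the dynamic program must be run with the exact width $w \le 2k$, so that the table sizes $10^{w}$ give $2^{O(k)}$. I also need to check that the sphere-cut decomposition of \cref{prop:sc-kouterplane} has $O(n)$ nodes, which the running time analysis of \cref{thm:fpt_sc} uses. This holds because $\mathcal T$ is a ternary tree with one leaf per edge, and a plane simple graph has $O(n)$ edges. I do not expect any real obstacle here: the only subtle point is confirming that the BFS-based decomposition of Tamaki and Dorn is a sphere-cut decomposition in the combinatorial-noose sense used by the dynamic program. \cref{prop:sc-kouterplane} already asserts this.
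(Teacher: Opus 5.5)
Your proposal is correct and follows the paper's own argument: the corollary comes from plugging the width-$2k$, $O(kn)$-time sphere-cut decomposition of \cref{prop:sc-kouterplane} into \cref{thm:fpt_sc}, which gives $2^{O(k)}n + O(kn) = 2^{O(k)}n$. Your remarks on connectivity and on the $O(n)$ size of the decomposition tree are sound bookkeeping that the paper leaves implicit.
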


Based on \cref{cor:fpt_outerplanar}, we give an FPT algorithm for the decision version of the problem, parameterized by the natural parameter~$h$.

\begin{theorem}\label{thm:fpt_natural}
For $n$-vertex plane graphs, \textsc{$h$-Non-Orthogonal Faces} can be solved in~$2^{O(h)}n$ time.
\end{theorem}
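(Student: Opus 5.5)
The plan is to reduce to \cref{cor:fpt_outerplanar} with a Baker-style layering argument, which is why the generalized problem \textsc{Non-Orthogonal Face Minimization}$^*$ was introduced. Consider the outer layering $L_1,\dots,L_k$ of $G$. Each vertex of degree at least five is a \emph{high-degree vertex} and must have at least $\deg(v)-4\ge 1$ incident edges that are non-residual. Each such edge has both incident faces in the solution $F'$, so every high-degree vertex lies on the boundary of some face of $F'$. The central structural claim will be that if a solution of size $h$ exists, then all high-degree vertices and all faces of $F'$ lie within a bounded number of layers around each ``cluster''. More precisely, I would show that the faces of $F'$ together with the high-degree vertices form at most $h$ connected pieces in the vertex-face incidence graph. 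I would also show that consecutive layers containing high-degree vertices cannot be too far apart without the solution being disconnected in a way we can exploit.

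Concretely, I would proceed in three steps. First, if some vertex has degree larger than $2h+4$, reject. Removing $\deg(v)-4$ edges around $v$ requires those edges to be incident only to faces of $F'$, and the faces around $v$ that are used must cover $\deg(v)-4$ consecutive-pair edges. Hence at least $\deg(v)-4$ wedges must be covered, so at least about $(\deg(v)-4)/1$ faces are needed; in any case this gives the bound $\deg(v)\le h+4$ up to a constant. Second, mark the layers that contain a high-degree vertex. Any face of $F'$ touches at most two consecutive layers, and each high-degree vertex needs an incident face of $F'$, so at most $2h$ layers contain high-degree vertices. Third, cut the layering into maximal runs of consecutive layers separated by gaps of unmarked layers. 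The instance then splits into independent subinstances: a solution never needs faces touching unmarked layers far from marked ones, since only high-degree vertices impose constraints. Each subinstance consists of at most $O(h)$ consecutive layers, possibly padded by one layer on each side, and I would solve it with $V'$ equal to the high-degree vertices of that run. Such a subgraph, obtained by deleting the layers outside the run and keeping the induced plane subgraph, is $O(h)$-outerplane. I would run \cref{cor:fpt_outerplanar} on each, and answer yes iff the sum of the optima is at most $h$. The total time is $\sum 2^{O(h)}n_i = 2^{O(h)}n$, with a linear-time layering computation.

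The main obstacle is justifying the decomposition. The faces of the subgraph induced by a run are not all faces of $G$. Faces adjacent to deleted inner layers merge into bigger faces, so selecting such a merged face could correspond to something not available in $G$, or vice versa. I would handle this by separating runs only by gaps of at least three unmarked layers. The optimal solution restricted to faces touching a run then uses only faces of $G$ within the run plus one buffer layer, and these remain faces of the subgraph. Residual degrees of vertices in $V'$ depend only on their incident faces, so they are preserved. I would check that a merged face is never beneficial: any merged face touches only buffer vertices, which are not in $V'$, so removing it from a solution keeps feasibility. This yields exact equivalence between the sum of the subinstance optima and the global optimum.
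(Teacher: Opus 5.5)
\textbf{Step~1 is wrong and must be dropped.} Your wedge count assumes that each face occupies one angle at $v$. In a graph that is connected but not biconnected, a single face can occupy many angles at a cut vertex. For example, the star $K_{1,d}$ has only one face; selecting it makes every edge non-residual, so it is a yes-instance for $h=1$ and every $d$, yet your algorithm rejects it for $d>6$. The bound $\deg(v)\le h+3$ does hold for biconnected graphs, where the faces around $v$ are distinct, but the paper only assumes connectivity. Nothing later uses the degree bound, since the running time of Corollary~\ref{cor:fpt_outerplanar} does not depend on the maximum degree, so simply delete the step. The vague ``structural claim'' in your first paragraph is also unnecessary; Steps~2--3 carry the proof.

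\textbf{Without Step~1, your argument is correct but takes a different route from the paper.}

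\emph{The paper's route.} It never splits the layering. It passes to the single induced subgraph $G'$ on all vertices that share a face with a vertex of degree greater than four. Using the outer layering of $G'$ itself, it argues that a yes-instance $G'$ is $(4h+2)$-outerplanar. The argument: in three consecutive layers of $G'$ untouched by $F'$, any vertex of the middle layer shares a face of $G'$ with a high-degree vertex inside these three layers, and that vertex then has no incident face in $F'$. Long unmarked gaps of $G$'s layering vanish because their vertices are deleted, so one call to Corollary~\ref{cor:fpt_outerplanar} with $V'=V(G')$ suffices.

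\emph{Your route.} You keep $G$'s layering, bound the number of marked layers by $2h$, and cut at unmarked gaps. This costs an additivity argument: every useful face is incident to a high-degree vertex, so it lies within exactly one run plus its buffers, and hence the optimum is the sum of the run optima. It also costs the merged-face analysis. Your merged-face claim is true:
\begin{itemize}
\item A face of $G[L_{a-1}\cup\dots\cup L_{b+1}]$ that is not a face of $G$ is either the outer face, bounded only by $L_{a-1}$ vertices, or it lies in an inner face of the outer boundary of $G-\bigcup_{j\le b}L_j$, and is then bounded only by $L_{b+1}$ vertices.
\item Every face of $G$ at a vertex of $L_a,\dots,L_b$ remains a face of the subgraph, with the same incident edges around that vertex.
\end{itemize}
The buffer layers together with the restriction of $V'$ are what neutralize merged faces. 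The gap size is irrelevant here: a single unmarked layer already separates runs, because a face spans at most two consecutive layers.

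\emph{Trade-off.} The paper's route is shorter and gives the bound $4h+2$. Yours makes the decoupling explicit and isolates the face-merging issue, which the paper leaves implicit: it needs that faces of $G'$ at high-degree vertices are faces of $G$.
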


\begin{proof}
	Let $G$ be a plane graph. Recall that in an optimum solution of \textsc{Non-Orthogonal Face Minimization}, a face may be selected to be non-orthogonal only if it is incident to at least one vertex of degree greater than four.  Let $V'$ be the set of vertices of $G$ that are incident to some face of $G$ that is incident to some vertex of degree greater than four. In particular, $V'$ contains all vertices of degree greater than four. Let $G'$ be the graph induced by the vertices in $V'$. By the above observation, it suffices to solve  \textsc{$h$-Non-Orthogonal Faces} for $G'$. %
    We claim that, if $G'$ is a yes-instance, then $G'$ is $k$-outerplanar with $k \leq 4h+2$.
    
    Let $F'$ be the set of non-orthogonal faces of $G'$, and consider an outer layering $L_1,\dots, L_k$ of $G'$. 
    Observe that each face of $F'$ only contains vertices of two consecutive layers $L_i$ and $L_{i+1}$, with $1 \leq i \leq k$. I.e., there are at most $2h$ layers that are covered by faces in~$F'$. Suppose, for a contradiction, that $k > 4h+2$. This implies that there exist at least three consecutive layers $L_j$, $L_{j+1}$, $L_{j+2}$ whose vertices are not incident to any face in~$F'$: Otherwise there can be at most $2h+2$ layers that do not contain vertices incident to some face in $F'$. Together with the at most $2h$ layers containing vertices incident to some face in~$F'$ this would be at most $4h+2$ layers.
    Let $v$ be a vertex of $L_{j+1}$. Since $v$ belongs to $G'$, it is incident to a face $f_v$ that is incident to a vertex $w$ (possibly, $w=v$) of degree greater than four. Note that $w \in L_j \cup L_{j+1} \cup L_{j+2}$. However, this implies that $w$ is a vertex of degree greater than four that is not incident to any face in $F'$, a contradiction.

	We proceed as follows. Construct $G'$ and test whether it is $(4h+2)$-outerplanar. If not, reject the instance. Otherwise, apply \cref{cor:fpt_outerplanar} to $G'$ and $V'=V(G')$.
\end{proof}

\section{Polynomial-Time Approximation Scheme for Fixed Embedding}\label{sec:ptas}

A polynomial-time approximation scheme (PTAS) is a series of algorithms $(\mathcal A_\epsilon)$ for an optimization problem such that for any $\epsilon > 0$ the algorithm $\mathcal A_\epsilon$ computes a feasible solution with relative error bound $\epsilon$. For the problem \textsc{Non-Orthogonal Face Minimization} this means that $\mathcal A_\epsilon$ computes a feasible solution $F'$, i.e., each vertex has residual degree at most~$4$ w.r.t.\ $F'$,
 whose size $|F'|$
is at most $(1+\epsilon)$ times the number of non-orthogonal faces in an optimum~solution.

\begin{theorem}\label{thm:ptas}
	\textsc{Non-Orthogonal Face Minimization} with fixed embedding admits~a~PTAS. 
\end{theorem}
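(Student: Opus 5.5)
We use Baker's shifting technique, combining the outer layering with \cref{cor:fpt_outerplanar}, which solves the starred version. Fix $\epsilon>0$ and set $k=\lceil 2/\epsilon\rceil$, or a similar constant chosen so that the counting in the third paragraph works out. Let $L_1,\dots,L_m$ be the outer layering of the plane graph $G$. Every face has all its vertices in at most two consecutive layers. We may therefore assign each face $f$ an index $\ell(f)$, the smallest layer index among its incident vertices; then $f$ touches only $L_{\ell(f)}$ and $L_{\ell(f)+1}$.

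For each shift $s\in\{0,\dots,k-1\}$ we cut the layering into slabs. The $j$-th slab consists of the layers $L_i$ with $jk+s < i \le (j+1)k+s$, extended by one layer on each side. We let $G_j$ be the plane subgraph induced by these vertices, which is $(k+2)$-outerplane. We set $V'_j$ to be the vertices of the core layers only. By \cref{cor:fpt_outerplanar} we solve \textsc{Non-Orthogonal Face Minimization$^*$} on $(G_j,V'_j)$ in $2^{O(k)}n_j$ time. Each face of $G_j$ used in the solution is mapped back to a face of $G$. The point is that a face of $G$ incident to a core vertex has all its vertices within one layer of the core, so it survives as a face of $G_j$, and residual degrees at core vertices coincide in $G$ and $G_j$. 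One has to argue that the optimum may be assumed to use only such faces. Faces of $G_j$ that are merged from several faces of $G$ (in the padding region) are never incident to core vertices, so they are useless and can be discarded. The union $F_s$ of the solutions over all $j$ is feasible for $G$, because every vertex lies in exactly one core and its residual degree depends only on the faces incident to it. The algorithm outputs the smallest $F_s$. Since the slabs for fixed $s$ have total size $O(n)$, the running time is $2^{O(1/\epsilon)}n$.

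For the approximation bound, let $F^*$ be an optimal solution. For slab $j$, the set $F^*_j$ of faces of $F^*$ incident to core vertices of slab $j$ is feasible for $(G_j,V'_j)$. Hence $|F_s|\le\sum_j|F^*_j|$. A face of $F^*$ is counted in two slabs only if it straddles a cut between two consecutive cores, that is, only if $\ell(f)\equiv s \pmod k$ for the cut position. Therefore $\sum_j|F^*_j|\le |F^*|+|\{f\in F^*:\ell(f)\equiv s \bmod k\}|$. Averaging over the $k$ shifts, some $s$ satisfies $|F_s|\le(1+1/k)|F^*|\le(1+\epsilon)|F^*|$.

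The main obstacle is making the boundary handling rigorous. The delicate parts are the identification of faces of $G_j$ with faces of $G$ near the core, and the treatment of vertices of $G$ not in $G_j$ whose removal merges faces. This is exactly why the starred version, together with a one-layer padding, is used: core vertices see only genuine faces of $G$. I would first verify carefully that a face incident to a vertex of $L_i$ uses only vertices of $L_{i-1}\cup L_i\cup L_{i+1}$, and that this face persists unchanged in $G_j$.
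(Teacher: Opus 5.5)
Your proposal is correct and follows essentially the paper's proof: Baker's shifting on the outer layering, slabs of $k$ core layers padded by one layer on each side, \textsc{Non-Orthogonal Face Minimization$^*$} solved via \cref{cor:fpt_outerplanar} with only the core vertices constrained, and a charging argument showing that only faces straddling a cut between consecutive cores are counted twice. The paper uses $k=2\lceil 1/\epsilon\rceil$ with only even shifts, which is the same up to constants; the one detail it makes explicit that you leave implicit is that $G_j$ may be disconnected, so the corollary is applied to each connected component separately. This is harmless, since every face of $G$ incident to a core vertex has its boundary inside a single component of $G_j$.
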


\begin{figure}
		\centering
	\begin{minipage}[b]{0.32\linewidth}
		\centering
		\includegraphics[page=19]{figures/ptas}
		\subcaption{\nolinenumbers}
        \label{fig:ptas0}
	\end{minipage}\hfil
	\begin{minipage}[b]{0.32\linewidth}
	\centering
	\includegraphics[page=20]{figures/ptas}
	\subcaption{\nolinenumbers}
    \label{fig:ptas1}
\end{minipage}\hfil
	\begin{minipage}[b]{0.32\linewidth}
	\centering
	\includegraphics[page=21]{figures/ptas}
	\subcaption{\nolinenumbers}
    \label{fig:ptas2}
\end{minipage}\hfil
	\caption{Blue and red vertices alternatingly show an outer layering into 12 layers. If $\epsilon =1/6$, we split the graph in three ways into subgraphs consisting of at most $2\lceil \frac{1}{\epsilon} \rceil + 2=8$ consecutive layers. Two consecutive graphs overlap in two layers (dark gray). 
    (a) $s=0$; (b) $s=1$; and (c) $s=2$.}
    \label{fig:ptas}
\end{figure}

\begin{proof}
	We use Baker's technique~\cite{baker94}, i.e., we try $1/\epsilon$ decompositions of a plane graph into overlapping subgraphs of bounded outerplanarity index, solve optimally on each, and argue that at least one decomposition has few faces in the optimum solution that cross the boundaries between pieces. This yields a ($1+\epsilon$)-approximation.  
    
    Let $G$ be a plane graph. First, we compute an outer layering. Recall that the vertices on the boundary of any inner face of $G$ are on at most two consecutive layers. 
	Let $k = 2 \cdot \lceil 1/\epsilon \rceil$. For each $0 \leq s < k/2$, we decompose $G$ as follows: The graph $G^s_i$, $i \geq 0$ is the graph induced by the $k+2$ layers indexed $2s + (i-1)k,\dots,2s + ik + 1$. Observe that $G_i$ is empty if $i > n/3k + 2$, since each layer contains at least three vertices. Moreover, $G^s_i$ and $G^s_{i+1}$ overlap in the two layers indexed $2s + ik$ and $2s + ik +1$. Each graph $G_i$ is at most $(k+2)$-outerplanar. Thus, by \cref{cor:fpt_outerplanar}, 
	the problem can be solved optimally in time $2^{O(k)}|V(G_i)|$. More precisely, we construct a minimum set $F^s_i$ of faces, such that all vertices in $L_i^s := L_{2s + (i-1)k + 1},\dots,L_{2s+ik}$ have residual degree at most four. Thus, $F_i^s$ is the outcome of \textsc{Non-Orthogonal Face Minimization$^*$} applied to the connected components of $G_i^s$ and~$L_i^s$. Observe that the connected components of $G_i^s$ share the outer face, however the vertices incident to the outer face of $G_i^s$ are not in $L_i^s$. So the outer face of $G^i_s$ is not contained in any optimal solution. 
    
    Let $F_s = \bigcup_{i}F_i^s$. 
    Then $F_s$ is a feasible solution for $G$: Let $v$ be a vertex of $G$ in layer $L_\ell$.
	Then there is an $i \geq 0$ and a $1 \leq j \leq k$  such that $\ell = s+(i-1)k + j$. It follows that $v$ is in~$L^s_i$ and 
	  $v$ is incident to at most four edges in   $E(G) \setminus E(F_s^i) \supseteq E(G) \setminus E(F_s)$.
	   
	Let $F_\textsc{opt}$ be an optimal solution for $G$ and let $F_\textsc{opt} \cap G^s_i$ be the set of faces in $F_\textsc{opt}$ that are only bounded by vertices and edges in $G^s_i$. Then, $F_\textsc{opt} \cap G^s_i$ is a feasible solution  for \textsc{Non-Orthogonal Face Minimization$^*$} applied to $G_i^s$ and~$L_i^s$, implying that $|F_\textsc{opt} \cap G^s_i| \geq |F^s_i|$. 
	
	For $0 \leq s < k/2$, let $F_\cap^s$ be the set of faces of $F_\textsc{opt}$ that are contained in two subgraphs $G^s_i$ and $G^s_{i+1}$ for some $i$, i.e., $F_\cap^s = \bigcup_{i \geq 0} \left((F_\textsc{opt} \cap G^s_i) \cap (F_\textsc{opt} \cap G^s_{i+1}) \right)$. Then $F_\cap^s$, $0 \leq s < k/2$ is a partition of $ F_\textsc{opt}$ into $k/2$ sets. So the smallest, say  $F_\cap^{z}$ has size at most $2/k \cdot  |F_\textsc{opt}|$. 
	We get, 
    
	\begin{equation*}   
	|F^z| \leq \sum_{i \geq 0}  |F^z_i| \leq \sum_{i \geq 0} |F_\textsc{opt} \cap G^z_i| = |F_\textsc{opt}| + |F_\cap^{z}| \leq (1+2/k) \cdot |F_\textsc{opt}| \leq (1+\epsilon) \cdot |F_\textsc{opt}|.
	\end{equation*} 
	
	Summarizing, for each $0 \leq s < \lceil 1/\epsilon \rceil$ and for each $0 \leq i \leq \frac{\epsilon}{6} \cdot n + 2$ compute optimum solutions $F^s_i$ for $G_i^s$ in $2^{O(1/\epsilon)}|V(G_i)|$ time. Among $F_s = \bigcup_{i \geq 0} F^s_i$, $0 \leq s < \lceil 1/\epsilon \rceil$ take the one with the fewest faces. This yields an $\epsilon$-approximation with run time $2^{O(1/\epsilon)}|V(G)|$.
\end{proof}

\section{Fixed-Parameter Tractability for Variable Embedding}

By using dynamic programming on the SPQR-tree and applying the techniques from the proof of \cref{thm:fpt_sc} to handle the R-nodes,  we show that \textsc{Non-Orthogonal Face Minimization} is FPT parameterized by the treewidth even if no embedding is given.  

\begin{theorem}\label{thm:fpt_tw_variable}
\textsc{Non-Orthogonal Face Minimization} can be solved for biconnected planar graphs without fixed embedding and with $n$ vertices and treewidth $k$ in time~$2^{O(k)}n + O(n^3)$.
\end{theorem}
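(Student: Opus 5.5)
We plan a bottom-up dynamic program over the SPQR-tree $\mathcal T$ of $G$, rooted at a $Q$-node $\rho$, computed in linear time~\cite{gutwengerMutzelMarks:gd00}. For each node $\nu$ with poles $s,t$, every embedding of $G(\nu)$ with the parent edge on the outer face has two outer faces: a left one $\ell$ and a right one $r$ between $s$ and $t$. These are exactly the faces of $G(\nu)$ that get merged with faces outside $G(\nu)$. So the interface of a partial solution is small: two bits saying whether $\ell$ and $r$ are chosen as non-orthogonal, and for each pole the number ($0,\dots,4$) of its edges inside $G(\nu)$ that are residual. Such an edge is residual when it is not incident to two chosen faces; for edges on $\ell$ or $r$ this depends on the two bits. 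A descriptor has constant size, and for each descriptor we store $T_\nu[\delta]$, the minimum number of chosen \emph{inner} faces of $G(\nu)$ over all embeddings of $G(\nu)$ and all choices. Here we need every non-pole vertex of $G(\nu)$ to have residual degree at most four. The outer faces $\ell,r$ are counted only at the top, because they merge with faces above. We also track the pole degrees up to a cap saying ``more than four''. The embedding flip of $G(\nu)$ just swaps $\ell$ and $r$ in the descriptor.

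\emph{Q- and S-nodes} are direct. A $Q$-node is a single edge. For an $S$-node, the children are composed in a chain. All children share the same left and right faces, so the bits must agree. Each inner cut vertex gets the sum of the pole degrees of its two adjacent children and must be at most $4$. No new faces arise, so we minimize the sum of the children's values. This takes constant time per descriptor pair and chaining.

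\emph{P-nodes} are harder: we must choose a permutation of the children $\nu_1,\dots,\nu_d$ and a flip for each. The face between consecutive children is a new inner face. If it is chosen, it must be chosen consistently with the right bit of one child and the left bit of the next. The residual degree at each pole is the sum over the children. I would handle this with a dynamic program over children that tracks constant-size state: the pole degrees so far (capped at $5$), the bit of the first face, and the bit of the currently open face. The key observation is that the order of the children only matters through the sequence of the bits. A child whose two bits are $(1,1)$ or $(0,0)$ is symmetric, and children with mixed bits can be flipped. So we can greedily count how many new faces must be chosen: a face between two children is chosen if and only if both of its sides say $1$. Since the descriptors only take constantly many types, we can group children by type and solve a small counting problem. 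Each child is processed in constant time, for $O(d)$ work per $P$-node.

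\emph{R-nodes} have a skeleton that is triconnected with a unique embedding up to flip. Here we apply the sphere-cut machinery of \cref{thm:fpt_sc} to the skeleton. Each virtual edge is treated as a leaf of the sphere-cut decomposition, carrying the table of its child in place of the single-edge base case. The two faces of the skeleton incident to the virtual edge correspond to $\ell$ and $r$, with both flips tried. The inner faces of the child are accounted for by its table value, and its pole degrees are added to the boundary-vertex counters. The skeleton has treewidth at most $O(k)$, since it is a minor of $G$. By \cref{th:sc-construction} we get a sphere-cut decomposition of width $O(k)$, and the merge recurrence \eqref{eq:dp-table} runs in $2^{O(k)}$ time per skeleton edge. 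The sizes of all skeletons sum to $O(n)$, so we obtain $2^{O(k)}n$ plus $O(n^3)$ for the decompositions.

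The main obstacle is the $P$-node step. We must show that an optimal ordering and flipping can be found from the constant-size child descriptors without enumerating permutations. In particular, we must rule out that a non-greedy interleaving saves faces. I would first try an exchange argument showing that all children of ``mixed'' type can be arranged so that their $1$-sides pair up. Then, with the counts of children of each type, the number of chosen new faces is a simple formula. The correctness of the whole program then follows as in the claim in the proof of \cref{thm:fpt_sc}: a solution restricted to $G(\nu)$ can be swapped for any other with the same descriptor.
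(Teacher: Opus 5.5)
Your handling of $Q$-, $S$- and $R$-nodes matches the paper's proof. You use the same records (two outer-face bits and pole residual degrees in $\{0,\dots,4\}$) and the same chaining with a sum constraint at $S$-nodes. At $R$-nodes you run the sphere-cut dynamic program of \cref{thm:fpt_sc} on $\skel(\nu)$, with the children's tables at the leaves, a minimum over the flip, and the bound $\tw(\skel(\nu))\le\tw(G)$.

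The gap is the $P$-node step. You flag it yourself, and the version you sketch is not correct as stated. A left-to-right DP that carries ``the bit of the currently open face'' places the children in the order in which it processes them. It therefore optimizes over flips but not over permutations. For example, take $L=1$, $R=0$ and chosen child types $(1,1),(0,0),(1,0)$ in this processing order. No flips make adjacent bits agree, yet the order $(1,1),(1,0),(0,0)$ is valid. Also, ``a face between two children is chosen iff both sides say $1$'' must not allow a junction with bits $1$ and $0$. The face between consecutive children is a single face, so both bits must be equal; otherwise the child with bit $1$ undercounts its residual edges. With that constraint in place, the exchange argument you plan, and the worry that some interleaving ``saves faces'', target something that cannot happen.

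The missing observation is a double count. Write $(\ell_i,r_i)$ for the children's bits and $(L,R)$ for the outer bits of $\nu$. Since the bits agree at every junction, each chosen in-between face absorbs exactly two $1$-sides. So the number of new chosen faces is $\frac{1}{2}\bigl(\sum_i(\ell_i+r_i)-L-R\bigr)$, independent of order and flips. Only feasibility depends on the arrangement. Let $m$ be the number of mixed children; an arrangement exists iff either $m\ge 1$ and $m\equiv L+R \pmod 2$, or $m=0$ and every child has type $(L,L)$. So run a DP over the children in arbitrary order with the following constant-size state:
\begin{itemize}
\item the pole sums, with states above $4$ discarded;
\item $m$ recorded as $0$, odd, or even $\ge 2$;
\item two flags for the presence of $(0,0)$- and $(1,1)$-children.
\end{itemize}
Each child contributes the additive cost $T_{\nu_i}[\delta_i]+\frac{1}{2}(\ell_i+r_i)$. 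This runs in $O(\deg\nu)$ time and closes the gap.

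The paper argues differently. At most four children can have positive residual degree at each pole, so all but at most eight children have both pole degrees $0$ and hence both bits $1$. It enumerates these at most eight special children together with their arrangement. It reaches $O(\deg\nu)$ by restricting, for each record $R$, to the eight children maximizing $k_i^{(0)}-k_i^{(R)}$. Once completed as above, your counting route avoids that enumeration and pruning and is the simpler of the two.
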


\begin{proof}
    Let $G$ be a biconnected planar graph with $n$ vertices.
    We compute an SPQR-tree $\mathcal T$ of $G$ and root it at an arbitrary $Q$-node $\rho$. Let $\nu \neq \rho$ be a node of $\mathcal T$ and let $s$ and $t$ be the poles of $\nu$. 
    Given a planar embedding of $G(\nu)$, we consider the two $s$-$t$-paths of $G(\nu)$ on the outer face as the boundary of the \emph{left and the right outer face} of $G(\nu)$, i.e., we consider the outer face of $G(\nu)$ as if it was separated by parent$(\nu)$.      
    
    We use dynamic programming in order to compute the set $F'$ of non-orthogonal faces. We proceed bottom-up in $\mathcal T$. For each node $\nu$ of the SPQR tree, we fill in a table with at most 100 entries for the following possibilities: The residual degree of each pole is a number in $\{0,1,2,3,4\}$ and the left and the right outer face both have the choice to be or not to be in $F'$.

    For the Q-nodes, there are four possibilities for the choice of the two outer faces. The entry is $0$  if (a) the residual degree of both vertices is one and at least one of the outer faces is not in $F'$ or (b) the residual degree is zero for both vertices and both outer faces are in $F'$. Otherwise the entry is $\infty$.

    Let now $\nu$ be a node with the children $\nu_i$, $i=1,\dots,\ell$, let $e_i$ be the edges of skel$(\nu)$ that correspond to $\nu_i$ and let parent$(\nu_i)=(s_i,t_i)$. Assume first that $\nu$ is an S-node. We may assume that $e_1,\dots,e_\ell$ are in this order in skel$(\nu)$. Let $G^{(1)}(\nu)=G(\nu_1)$ and let $G^{(i)}(\nu)$, $i=2,\dots,\ell$ be the graph obtained from  $G^{(i-1)}(\nu)$ and $G(\nu_i)$ by merging $t_{i-1}$ and $s_i$. Then $G(\nu)=G^{(\ell)}(\nu)$. Iteratively, we compute records of $G^{(i)}(\nu)$, $i=2,\dots,\ell$ with poles $s_1$ and $t_i$. Observe that the desired record for $G^{(i)}(\nu)$ already determines the residual degree of $s_1$ in $G^{(i-1)}(\nu)$ and the residual of $t_i$ in $G(\nu_i)$ and the left and right outer face of both. Among all these entries, we combine the at most five entries for the residual degree $k$ of $t_{i-1}$ in $G^{(i-1)}(\nu)$ with the residual degrees at most $4-k$ for $s_i$ in $G(\nu_i)$ and take the one with the fewest inner faces for $F'$. The run time for one S-node $\nu$ is in $O(\deg \nu)$. It follows that the run time for all S-nodes is linear in the  number of edges of $\mathcal T$ and, thus, in $O(n)$.
    
	Assume now that $\nu$ is a $P$-node.  There can be at most four children where one pole has a residual degree greater than zero and at most four children where the other pole has a residual degree greater than zero. There are $\ell \choose 8$ possible choices for these eight components; all other components must have residual degree zero, which implies that the outer faces must both be in $F'$. For the chosen eight components, we check all possible combinations that yield the desired entry for $\nu$. This creates in total at most 7 new inner faces. Each of these new inner faces is in $F'$ if the respective outer face of one of the adjacent components was in~$F'$. We add their number to the sum of the number of inner faces in the respective records for all $\ell$ child components. If $\ell > 8$, we add the remaining components into one of the 9 faces created by the 8 chosen components. If we add them into a face that was not in $F'$ then the number of inner faces in $F'$ increases by $\ell-8$, otherwise by $\ell-9$. Per choice of the eight components this can be done in constant time. In the end, we maintain for each entry of $\nu$ one with the fewest inner faces in $F'$. The total run time for one $P$-node $\nu$ is  $O(\deg^8 \nu)$. 

    In order to speed up the run time, consider for each child $\nu_i$ the number $k_i^{(0)}$ of inner faces needed to obtain zero residual degree for both poles. More generally, for each choice~$R$ of the residual degrees for each of the two poles and for each of the two outer faces, let $k_i^{(R)}$ be the number of inner faces needed to fulfill $R$ for $G(\nu_i)$. 
    Let  $k^{(0)} = \sum_{i=1}^\ell k_i^{(0)}$. 
    Assume now that we picked the eight children $\nu_{i_j}$, $j=1,\dots,8$ and for child $\nu_{i_j}$, we considered the choice~$R_j$. Then the total number of faces in $F'$ is $k^{(0)} - \sum_{j=1}^8(k_{i_j}^{(0)}-k_{i_j}^{(R_j)})$ plus the number of faces in $F'$ between any two components. But the latter only depends on the choice for the outer faces of the component, i.e., on the choice of $R_{j}$, $j=1,\dots,8$,  and on the ordering of the components, but not on the choice of the eight components.  Thus,
    we store for each record~$R$ a set of eight children for which $k_i^{(0)}-k_i^{(R)}$ is largest.  Now we pick the 8 components with residual degree possibly different from zero for the poles only among these at most 800 children. This results in a constant number of choices. 
    Thus, each $P$-node $\nu$ can be handled in  $O(\deg \nu)$ time and all $P$-nodes together in $O(n)$ time.

    For an R-node $\nu$,  we adopt the approach in \cref{thm:fpt_sc}, i.e., we compute a sphere-cut decomposition of skel$(\nu)$ without the edge parent$(\nu)$. Observe that the treewidth of skel$(\nu)$ is bounded by tw$(G)$. When applying the dynamic program described in the proof of  \cref{thm:fpt_sc}, the records for the leaves of the sphere-cut-decomposition, i.e., the edges $e_i$, $i=1,\dots,\ell$ of skel$(\nu)$ are the records of $G(\nu_i)$. In the first place, the algorithm will compute for each combination of residual degrees for the poles an entry $R_L$ for the case where the left outer face is in $F'$ and the right outer face is not and an entry $R_R$ where the right outer face is in $F'$ and the left outer face is not. Since the embedding of skel$(\nu)$ is only fixed up to a flip, we take the minimum value among $R_R$ and $R_L$ for both cases.
    Per $R$-node $\nu$, all entries can be computed in $2^{O(\tw(G))}\deg \nu + O(\deg^3 \nu)$ time and all $R$-nodes together in $2^{O(\tw(G))}n + O(n^3)$~time. 
    
    The root of $\mathcal T$ is handled as a $P$-node with two children: one is its actual child and the other is a $Q$-node.	
	Once we have computed the entries for the root, we consider only those with a finite value, where either both outer faces are in $F'$ (in which case we add one to the number of faces in $F'$) or neither is in $F'$ and take the one with the fewest faces in $F'$. 
\end{proof}

Since all nodes $\nu$ other than the $R$-nodes could be handled in deg$(\nu)$ time
in the proof of \cref{thm:fpt_tw_variable}, 
we obtain the following corollary.

\begin{corollary}
\textsc{Non-Orthogonal Face Minimization} can be solved in linear time for biconnected series-parallel graphs without fixed embedding.
\end{corollary}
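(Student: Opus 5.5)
The corollary should follow by specializing the dynamic program in the proof of \cref{thm:fpt_tw_variable} to the case without $R$-nodes. By definition, the SPQR-tree $\mathcal T$ of a biconnected series-parallel graph $G$ contains only $S$-, $P$-, and $Q$-nodes. So the plan is to run the same bottom-up algorithm on $\mathcal T$ and check that no step now costs superlinear time.

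The steps, in order, are as follows.
\begin{enumerate}
\item Compute $\mathcal T$ in linear time~\cite{gutwengerMutzelMarks:gd00} and root it at a $Q$-node $\rho$. Since $G$ is planar, $|E(G)|\in O(n)$, so $\mathcal T$ has $O(n)$ nodes and $\sum_\nu \deg\nu \in O(n)$.
\item Fill the constant-size tables (at most $100$ entries per node). For $Q$-nodes this takes $O(1)$ time each.
\item For $S$-nodes, use the iterative series merging from the proof of \cref{thm:fpt_tw_variable}. Each merge combines two constant-size tables in constant time, so an $S$-node costs $O(\deg\nu)$.
\item For $P$-nodes, use the accelerated procedure. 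For every child and every record $R$, compute the gain $k_i^{(0)}-k_i^{(R)}$. For each $R$, keep the eight children with the largest gain; a linear-time selection (or a bounded-size heap) does this in $O(\deg\nu)$ time. Then enumerate the constantly many choices of eight children among these at most $800$ candidates, together with their records and orderings, each evaluated in $O(1)$ time.
\item Handle the root as a $P$-node with two children, which costs $O(1)$.
\end{enumerate}
Summing over all nodes gives $O(\sum_\nu \deg \nu)=O(n)$. In particular, the $O(n^3)$ term, which came only from computing sphere-cut decompositions of $R$-node skeletons, disappears.

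Correctness is inherited from the proof of \cref{thm:fpt_tw_variable}, since we run the same algorithm and only remove the case that never arises.

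The only point needing care, and the main obstacle, is the $P$-node speed-up. One must make sure the claimed $O(\deg\nu)$ bound really holds: the selection of the top-eight children per record must be linear, and the contribution of the faces between the chosen components must depend only on the chosen records and the ordering, not on which children were chosen. I would restate this exchange argument briefly, as in the proof of \cref{thm:fpt_tw_variable}. With it, every non-$R$ node is processed in time linear in its degree, and the corollary follows.
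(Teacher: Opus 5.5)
Your proposal is correct and follows the paper's argument: the SPQR-tree of a biconnected series-parallel graph has no $R$-nodes, and the proof of \cref{thm:fpt_tw_variable} processes every $Q$-, $S$-, and $P$-node (including the root) in $O(\deg\nu)$ time. Summing over all nodes therefore gives linear time, and the $O(n^3)$ term disappears. Your extra details on the top-eight selection per record and the exchange argument only spell out the $P$-node speed-up already given in that proof.
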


\section{Conclusion}
We studied the problem of computing planar drawings with a minimum number of non-orthogonal faces. We showed that the problem is \NP-hard even for $3$-connected graphs.  For plane graphs, we gave FPT algorithms and a polynomial-time approximation scheme. In the variable-embedding setting, we gave an FPT algorithm for biconnected graphs parameterized by treewidth. We wonder whether the last result can be extended to connected planar graphs.

All our algorithms have a singly-exponential dependence on the parameter.  It is an interesting question whether also sub-exponential FPT algorithms exist and for which parameterizations a polynomial-size kernel can be obtained.

\bibliography{references}

\clearpage

\appendix

\section{Omitted Proofs}

\characterization* \label{lem:char*}

\begin{proof}
	Consider first a planar drawing $\Gamma$ of $G$ and let $F'$ be the set of non-orthogonal faces in $\Gamma$. Let $v$ be a vertex of $G$ and let $e$ be an edge that is incident to $v$. If $e$ is a residual edge w.r.t.\ $F'$, i.e., if  one of the two faces incident to $e$ is not in $F'$ then $e$ is incident to an orthogonal face and, hence, must be drawn as an orthogonal polyline. But at most four edges incident to $v$ can be drawn as an orthogonal polyline, two starting with a vertical segment at $v$ and two starting with a horizontal segment at $v$.  It follows that the residual degree of $v$ w.r.t.\ $F'$ is at most four.
	
	Consider now a set $F'$  of at most $h$ faces 
	of $G=(V,E)$ (in some/the given embedding) such that each vertex has residual degree at most four w.r.t.\ $F'$. Let $G'=(V, E \setminus E(F'))$ %
    be the spanning subgraph of $G$ containing exactly the residual edges. $G'$ has maximum degree four and, thus, admits an orthogonal drawing $\Gamma'$. We may assume that $\Gamma'$ respects the embedding of $G$, in particular also  the relative position of the connected components of $G'$.  We now add the remaining edges, i.e., the edges in $E(F')$ as some polyline between the end vertices into $\Gamma'$ such that the chosen/given embedding is respected. Since the edges in $E(F')$ are only incident to faces in $F'$ it follows that at most the faces in $F'$ are non-orthogonal. Thus, $\Gamma'$ is $h$-near orthogonal.
\end{proof}

\scgeneral* \label{th:sc-construction*}

\begin{proof}
    We have $\bw(G) \le \tw(G)+1$.  Using a result of Biedl and Vel\'azquez~\cite[Lemma 1]{BiedlV13}, we can augment $G$ by adding edges to a triangulation~$G'$ with~$\tw(G') = \max \{3,\tw(G)\}$.  We then compute a sphere-cut decomposition $\cal T'$ of $G'$ of width $\bw(G')$ in $O(n^3)$ time using the algorithm from~\cite{DornPBF10}.  Observe that~$\bw(G') \le \tw(G')+1 \le \max\{3,\tw(G)\}+1 \le \max\{3,\lfloor 3/2 \bw(G)\rfloor\} + 1 \in O(\bw(G))$.  
    By removing the triangulation edges, we obtain a sphere-cut decomposition $\cal T$ of $G$ of no greater width from a sphere-cut decomposition $\cal T'$ of $G'$. 
    Specifically, $\cal T$ can be obtained by applying the following procedure.
    First, we initialize $\cal T = \cal T'$. 
    We replace an appearance of a face $f'$ in any combinatorial noose by the face of $G$ containing $f'$.
    Observe that the leaves of $\cal T$ that correspond to the removed triangulation edges still belong to $\cal T$, however the noose associated to each such node encloses neither an edge nor a vertex in its interior. As long as there exists a leaf $\mu$ in $\cal T$ whose noose has this property, i.e., it does not enclose any edge or vertex in its interior, we perform the following steps.
    First, remove $\mu$ from $\cal T$. 
    Second, let $\nu$ be the neighbor of $\mu$ and let $\tau_1$ and $\tau_2$ be the neighbors of $\nu$ different from $\mu$; remove $\nu$ from $\cal T$ and introduce the  edge $e$ connecting $\tau_1$ and $\tau_2$. Observe that the two edges $e_i$, $i=1,2$,  between $\nu$ and $\tau_i$ corresponds to the same partition $E_1,E_2$ of the edges of $G$. In order to construct the noose for the new edge $e$, we take the noose $\gamma$ of $e_1$. For each vertex $v$ that is only incident to edges in $E_1$ or only to edges in $E_2$, we remove $v$ and its predecessor face in $\gamma$ from $\gamma$%
    .
    Note that, after each application of this procedure, $\cal T$ is a ternary tree whose nodes are a subset of the nodes of $\cal T'$ (and whose nooses are, in particular, subsequences of the nooses in $\cal T'$). Hence, eventually, this procedure yields a sphere-cut decomposition of $G$ of width~$O(\bw(G))$.
\end{proof}

\scseriesparallel*\label{prop:sp-sc*}

\begin{proof}
Let $G$ be a biconnected plane series-parallel graph and let $\cal T$ be its SPQR-tree. We turn~$\mathcal T$ into a sphere-cut decomposition by replacing $S$-nodes and $P$-nodes as follows. We replace each S-node~$\mu$ with more than two children by a rooted ternary tree $T_\mu$ whose leaves correspond to the virtual edges of~$\skel(\mu)$, in such a way that the leaves of each subtree of $T_\mu$ correspond to consecutive virtual edges in~$\skel(\mu)$.  Similarly, we replace each P-node~$\mu$ with more than two children by a rooted ternary tree $T_\mu$ whose leaves correspond to the virtual edges of~$\skel(\mu)$, in such a way that the leaves of each subtree of $T_\mu$ correspond to consecutive virtual edges in the embedding of~$\skel(\mu)$ determined by the embedding of $G$. This can be done in linear time.  This modified SPQR-tree~$\cal T'$ is a branch decomposition and in fact a sphere-cut decomposition of $G$ of width $2$, where the noose of each edge $e$ of $\cal T'$ is precisely the two poles shared by the skeletons of the endpoints of~$e$.
\end{proof}

\end{document}